\newcommand{\orcid}[1]{\href{https://orcid.org/#1}{\includegraphics[width=10pt]{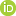}}}
\definecolor{PineGreen}{rgb}{0.0,0.47,0.44}
\definecolor{MidnightBlue}{rgb}{0.1,0.1,0.44}
\definecolor{magenta}{rgb}{1.0,0.0,1.0}
\definecolor{bl1}{HTML}{4479A1}
\definecolor{pur1}{HTML}{52196D}
\definecolor{mag1}{HTML}{2AD0F1}
\definecolor{org1}{rgb}{.92,.39.21}
\definecolor{pur2}{rgb}{.53,.47,.7}
\definecolor{desycyan}{rgb}{0.00,0.68,0.93}
\definecolor{desyorange}{rgb}{0.96,0.52,0.07}
\definecolor{desygray}{rgb}{0.47,0.47,0.47}
\newcites{P}{Publications}
\newcommand{\eqnum}{\refstepcounter{equation}\textup{\tagform@{\theequation}}}
\newtheorem{theorem}{Theorem}
\numberwithin{theorem}{section}
\newtheorem{proposition}[theorem]{Proposition}
\newtheorem*{theorem*}{Theorem}
\newtheorem{lemma}[theorem]{Lemma}
\newtheorem{corollary}[theorem]{Corollary}
\theoremstyle{definition}
\newtheorem{definition}[theorem]{Definition}
\theoremstyle{remark}
\newtheorem{remark}[theorem]{Remark}
\newtheorem{example}[theorem]{Example}
\newcommand{\Con}{\mathbf{Con}}
\newcommand{\RR}{\mathbb{R}}
\newcommand{\QQ}{\mathbb{Q}}
\newcommand{\CC}{\mathbb{C}}
\newcommand{\ZZ}{\mathbb{Z}}
\newcommand{\NN}{\mathbb{N}}
\newcommand{\scrH}{\mathscr{H}}
\newcommand{\cA}{\mathcal{A}}
\newcommand{\cU}{\mathcal{U}}
\newcommand{\cF}{\mathcal{F}}
\newcommand{\cG}{\mathcal{G}}
\newcommand{\cH}{\mathcal{H}}
\newcommand{\cI}{\mathcal{I}}
\newcommand{\frakr}{\mathfrak{r}}
\newcommand{\frakR}{\mathfrak{R}}
\newcommand{\newt}{\mathbf{N}}
\newcommand{\rank}{\mathrm{rank}}
\newcommand{\vol}{\mathrm{vol}}
\newcommand{\conv}{\mathrm{conv}}
\newcommand{\xx}{\boldsymbol{x}}
\DeclareRobustCommand
\newcommand{\avg}[1]{\left< #1 \right>} 
\def\DD{D\kern-.7em\raise0.3ex\hbox{\char '55}\kern.33em}
\definecolor{Ftitle}{RGB}{11,46,108}
\colorlet{tableheadcolor}{Ftitle!25} 
\colorlet{tablerowcolor}{gray!10} 
\title{Characterizing Cohen-Macaulay One-Loop Feynman Integrals
\\
 }
\author{Kyrill Michaelsen\orcid{0009-0009-1681-2162}$\,{}^1$}
\author{Felix Tellander\orcid{0000-0001-6418-8047}$\,{}^{2,3}$}
\email{\href{mailto:kyrill.michaelsen@uni-hamburg.de}{kyrill.michaelsen@uni-hamburg.de}}
\email{\href{mailto:felix@tellander.se}{felix@tellander.se}}
\address[1]{University of Hamburg, Department of Mathematics, 20146 Hamburg, Germany}
\address[2]{Mathematical Institute, University of Oxford, Oxford OX2 6GG, UK}
\address[3]{School of Mathematics and Hamilton Mathematics Institute, Trinity College,\newline
\hspace*{1.15cm}Dublin 2, Ireland}
\date{\today}
\begin{document}
\begin{abstract}
We study the generalized hypergeometric systems, in the sense of Gel'fand, Kapranov, and Zelevinsky, associated with one-loop Feynman integrals, and determine when their rank is independent of space-time dimension and propagator powers. This is equivalent to classifying when the associated affine semigroup ring is Cohen-Macaulay. For massive one-loop integrals, we prove necessary and sufficient conditions for Cohen-Macaulayness, generalizing previous results on normality for these rings. We show that for Feynman integrals, the Cohen-Macaulay property is fully determined by an integer linear program built from the Newton polytope of the integrand and find a graphical description of its solutions. Furthermore, we provide a sufficient condition for Cohen-Macaulayness of general one-loop integrals.
\end{abstract}
\maketitle
\tableofcontents
\section{Introduction}
Feynman integrals lie at the core of computing particle interactions in high-energy physics \cite{Weinberg:1995mt}, the scattering of gravitational waves \cite{Bern:2019nnu,Dlapa:2022lmu}, or determining critical exponents in statistical field theory \cite{Zinn-Justin:1998qip}. While they have been the subject of immense research efforts for decades \cite{Weinzierl:2022eaz}, evaluating them is still highly non-trivial. It is therefore crucial to better understand the mathematical structures underpinning Feynman integrals, to facilitate both better numerical \cite{Smirnov:2021rhf,Heinrich:2023til,Borinsky:2020rqs,Borinsky:2023jdv} and analytic \cite{Smirnov:2012gma,Panzer:2014caa,Chestnov:2022alh,Ananthanarayan:2022ntm} evaluation of them. Since Feynman integrals of fields with tensor structure can always be reduced to scalar Feynman integrals \cite{PASSARINO1979151, Tarasov:1996br,Tarasov:1997kx}, the understanding of scalar Feynman integrals is crucial for the calculation of any Standard Model process. In Lee-Pomeransky \cite{Lee:2013hzt} and Feynman parametrization \cite{Itzykson:1980rh}, scalar integrals are of Euler-Mellin type \cite{delaCruz:2019skx,Klausen:2019hrg} in the sense of \cite{Nilsson2013, Berkesch2014}, which are solutions to \textit{Gel’fand-Kapranov-Zelevinsky} (GKZ) systems \cite{Gelfand1989,Gelfand1990}. These are systems of linear partial differential equations generalizing the classical hypergeometric equations. 

In general, a Feynman integral has fewer kinematic variables than required by the GKZ framework. However, for one-loop integrals, it is often the case that the number of variables match perfectly and the GKZ framework applies without any modifications. Even more so, the GKZ theory has been used successfully for higher loop integrals, especially in the banana family \cite{Klemm:2019dbm,Bonisch:2020qmm}. Within the vast theory of generalized hypergeometry, one might wonder: Do Feynman integrals exhibit any special structure? The answer to this is affirmative; many Feynman integrals have normal affine semigroups \cite{Tellander:2021xdz,walther2022feynman,Dlapa:2023cvx}. This has a great simplifying effect on the GKZ system, as it implies that the associated affine semigroup ring is \emph{Cohen-Macaulay} \cite{Hochster1972}. This means, in particular, that the dimension of the solution space (i.e. the number of master integrals) and the characteristic variety (i.e. the microlocal description of the Landau singularities) are both independent of space-time dimension and propagator powers \cite{SST,Matusevich2005}. For computational purposes, this implies that the number of master integrals is simply given by the simplicial volume of a polytope and that the characteristic variety (even characteristic cycle) is fully described by the \emph{principal A-determinant} \cite{adolphson1994hypergeometric} (see also \cite{Klausen:2021yrt} for an introduction for physicists) and therefore fully calculable using classical lattice and commutative Gr\"obner basis methods.

In \cite{Matusevich2005} the authors showed that the rank (i.e. dimension of the the solution space) of a GKZ system is independent of the parameters (i.e. space-time dimension and propagator powers for Feynman integrals) if and only if the associated semigroup ring is Cohen-Macaulay. In order to harness the full power of this result, we want to describe when the semigroup ring is Cohen-Macaulay and not necessarily normal. In this paper, we give a necessary and sufficient condition for Cohen-Macaulayness for massive one-loop integrals, Theorem~\ref{massivetheorem}, and a sufficient condition for general one-loop integrals, Theorem~\ref{generaltheorem}. 

This paper is organized as follows. In Section \ref{sec: generalized hypergeometric systems} we describe how GKZ hypergeometric systems arise from Feynman integrals in Lee-Pomeransky form and in Section \ref{sec: affine semigroups} we introduce the notions of affine semigroup, semigroup rings, and Cohen-Macaulay rings. Important for our study is that one-loop Feynman integrals fit within the framework of edge rings of \cite{ohsugi1998normal}, we introduce this framework in Section \ref{sec: graphs and edge polytopes} and use this in Proposition \ref{prop: (ii) always satisfied} to show how Cohen-Macaulayness for one-loop integrals simplifies compared to the general case. In Section \ref{sec: massive one loop} we prove three results on the Cohen-Macaulayness of massive one-loop integrals; the main result, Theorem~\ref{massivetheorem}, is a full classification. En route, we give a graphical description of the integer linear programs, that determine Cohen-Macaulayness. Finally, in Section \ref{sec: general one-loop}, we address this question for general one-loop integrals and prove a sufficient condition for them to be Cohen-Macaulay in Theorem~\ref{generaltheorem}.
\section{Generalized hypergeometric systems}\label{sec: generalized hypergeometric systems}
In this section, we introduce the notion of GKZ-systems to demonstrate how they allow for understanding Feynman integrals as $A$-hypergeometric functions. Feynman integrals are integrals associated to a \emph{Feynman graph} $G$. Let $G:=(E,V)$ be a connected bridge-less graph with edge set $E$, vertex set $V$ and loop number $L=|E|-|V|+1$. The vertex set $V$ has the disjoint partition $V=V_{\mathrm{ext}}\sqcup V_{\mathrm{int}}$ where we assume $V_{\mathrm{ext}}$ is non-empty and each vertex $v\in V_{\mathrm{ext}}$ is assigned an external incoming momentum $p_v$. Each edge $e\in E$ is assigned a mass $m_e$. Writing the Feynman integral in Lee-Pomeransky form \cite{Lee:2013hzt}
using analytic regularization \cite{Speer:1968qxh} the integral has the form of an Euler-Mellin integral \cite{Nilsson2013,Berkesch2014}:
\begin{equation}\label{eq: LP integrals}
    \cI(z;\,D,\nu_1,\ldots,\nu_{|E|}):=\int_{\RR_{\ge 0}^{|E|}}\frac{x_1^{\nu_1}\cdots x_{|E|}^{\nu_{|E|}}}{\mathcal{G}(z;\,x)^{D/2}}\frac{dx_1}{x_1}\wedge\cdots\wedge\frac{dx_{|E|}}{x_{|E|}}
\end{equation}
where $D,\nu_1,\ldots,\nu_{|E|}\in\CC$ and $\cG(z;\,x):=\cU(x)+\cF(z;\,x)$ is the sum of the two \emph{Symanzik polynomials} where $z$ represents the kinematic variables. These polynomials have a simple combinatorial description in terms of the underlying graph \cite{Nakanishi1971}:
\begin{equation}\label{eq:F_Symanzik}
\begin{split}
    \mathcal{U}&:=\sum_{\substack{~~~\,T {\rm \;a \; spanning} \\ {\rm tree \; of \; }G}}\;\;\;\prod_{e\not\in T}x_e,\\
    \mathcal{F}&:= \mathcal{U}\sum_{e\in E}m_e^2x_e- \sum_{\substack{F {\rm \;a \; spanning} \\ {\rm 2-forest \; of \; }G}}\;\;\; p(F)^2\prod_{e\not\in F}x_e,
\end{split}
\end{equation}
where a \emph{spanning tree} is a connected subgraph of $G$ containing all vertices but no cycles, and a \emph{spanning two-forest} is a spanning tree with one more edge removed. For a spanning two-forest $F=(T,T')$ of $G$ we define $p(F)=\sum_{v\in T\cap V_{\mathrm{ext}}}p_v$ as the total momentum flowing through the cut of $G$. With multi-index notation $x^{\boldsymbol{\alpha}{_i}}=x_1^{\alpha_{1i}}x_2^{\alpha_{2i}}... x_{|E|}^{\alpha_{|E|i}}$ the Lee-Pomeransky polynomial can be written as:
\begin{equation}
    \mathcal{G}(z;\,x)=\sum_{i=1}^nz_ix^{\boldsymbol{\alpha}_i}
\end{equation} 
which allows for storing the information of the Euler-Mellin integral inside a $(|E|+1)\times n$ integer matrix $A$ and a vector $\beta\in\CC^{|E|+1}$ \cite{Nilsson2013,Berkesch2014}: 
    \begin{equation}\label{Amatrix}
        A :=\left(\mathbf{a}_1\cdots \mathbf{a}_n\right) =\begin{pmatrix}
        1 & 1 & \cdots & 1 \\
        \alpha_{11} & \alpha_{12} & \cdots & \alpha_{1n} \\
        \alpha_{21} & \alpha_{22} & \cdots & \alpha_{2n} \\
        \vdots&\vdots&\ddots&\vdots\\
        \alpha_{|E|1} & \alpha_{|E|2} & \cdots & \alpha_{|E|n}
        \end{pmatrix},\quad\beta =
        \begin{pmatrix}
        -D/2 \\ 
        -\nu_1 \\ 
        \vdots \\
        -\nu_{|E|}
        \end{pmatrix}.
    \end{equation}

We define the GKZ-system as a left ideal of the Weyl algebra $W:=\QQ(\beta)[z_1\dots,z_n]\avg{\partial_{1},\dots,\partial_{n}}$ consisting of the toric ideal $I_A:=\avg{\partial^\mathbf{v}-\partial^\mathbf{u}\,|\,A\mathbf{v}=A\mathbf{u}, \,\forall\mathbf{v},\mathbf{u}\in\NN^n}$ and the ideal $Z_A(\beta)$ of Euler operators:
\begin{equation}
   Z_A(\beta):= \avg{\sum_{j=1}^{n}a_{ij}z_j\partial_{j}-\beta_i\,|\,i\in\{1,\dots,|E|+1\}}.
\end{equation}
The GKZ-hypergeometric system $H_A(\beta)\subset W$ is the left ideal defined by the sum: 
    \begin{equation}
        H_A(\beta):=I_A +Z_A(\beta).
    \end{equation}
 
The ideal $H_A(\beta)$ is always holonomic \cite{adolphson1994hypergeometric}, meaning in particular that it has finite rank~\cite{SST}. Moreover, for our special $A$-matrix in \eqref{Amatrix} it is obvious that the $(1\cdots 1)$ lie in the $\QQ$-row-span of $A$, meaning that $H_A(\beta)$ is even regular holonomic \cite{hotta1998equivariant}.

 The \emph{Newton polytope} of a polynomial is the convex hull of exponent vectors in the polynomial. For the Lee-Pomeransky polynomial, the Newton polytope $\newt[\mathcal{G}]$ is given by the convex hull of the column vectors of the  $A$-matrix without its first row of ones:
\begin{equation}
  \newt[\cG]=\conv\{\boldsymbol{\alpha}_1,\ldots,\boldsymbol{\alpha}_n\}.
\end{equation}
In the papers \cite{Nilsson2013, Berkesch2014} the authors describe, in terms of the Newton polytope $\newt[\cG]$, when an Euler-Mellin integral like \eqref{eq: LP integrals} is convergent. Moreover, they show that it is a solution to the GKZ system $H_A(\beta)$ with $A$ and $\beta$ given in \eqref{Amatrix}. For generic $\beta$ it was proven in \cite{Gelfand1989,SST} that the rank of $H_A(\beta)$ is given by the simplicial volume of $\newt[\cG]$. It was later shown in \cite{Matusevich2005} that this is a characteristic of Cohen-Macaulayness:
\begin{theorem}[{\cite[Corollary 9.2]{Matusevich2005}}]\label{thm: rank=vol}
    \begin{equation*}
        \rank(H_A(\beta))=|E|!\,\vol(\newt[\cG])\,\forall\beta\iff \CC[\partial_1,\ldots,\partial_n]/I_A\ \mathrm{is}\ \mathrm{Cohen}\text{-}\mathrm{Macaulay}.
    \end{equation*}
\end{theorem}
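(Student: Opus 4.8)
The plan is to reduce the biconditional to a statement about when the holonomic rank can strictly exceed its generic value, and then to control that excess homologically through the local cohomology of the semigroup ring $S_A := \CC[\partial_1,\ldots,\partial_n]/I_A$. Throughout write $d := |E|+1$ for the number of rows of $A$, so that $\dim S_A = \rank(A) = d$. First I would fix the baseline using facts already available in the excerpt: for generic $\beta$ the rank equals the simplicial volume, $\rank(H_A(\beta)) = |E|!\,\vol(\newt[\cG])$, and since holonomic rank is upper semicontinuous in $\beta$ its generic value is minimal, giving $\rank(H_A(\beta)) \ge |E|!\,\vol(\newt[\cG])$ for every $\beta$. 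Consequently the left-hand side of the equivalence holds for all $\beta$ if and only if the rank never strictly exceeds the volume, i.e.\ the rank-jumping locus $\{\beta : \rank(H_A(\beta)) > |E|!\,\vol(\newt[\cG])\}$ is empty; the task is to characterize this locus.

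The engine for detecting rank jumps is the Euler--Koszul complex. Regarding $S_A$ as a $\ZZ^{d}$-graded module over $R := \CC[\partial]$, one lets the shifted Euler operators $E_i-\beta_i$, with $E_i = \sum_{j} a_{ij} z_j \partial_j$, act as a Koszul-type differential on $D\otimes_R S_A$, where $D$ is the Weyl algebra; this produces the Euler--Koszul homology modules $H_i(S_A;\beta)$, whose zeroth term $H_0(S_A;\beta) = D/H_A(\beta)$ is exactly the GKZ module, so that $\rank(H_A(\beta))$ is its holonomic rank. Following the homological theory of Matusevich--Miller--Walther, each $H_i(S_A;\beta)$ is holonomic, and the key detection principle is that $\rank(H_A(\beta)) > |E|!\,\vol(\newt[\cG])$ if and only if $H_i(S_A;\beta)\neq 0$ for some $i\ge 1$; a finer analysis even quantifies the size of the jump in terms of the ranks of these higher homologies.

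Finally I would translate the nonvanishing of higher Euler--Koszul homology into local cohomology. The content of the machinery is that $H_i(S_A;\beta)\neq 0$ for some $i\ge1$ exactly when $-\beta$ lies in the exceptional set $\mathcal{E}_A := \mathrm{qdeg}\big(\bigoplus_{j<d} H^j_{\mathfrak m}(S_A)\big)$, the Zariski closure in $\CC^{d}$ of the multidegrees supporting the sub-top local cohomology modules of $S_A$, where $\mathfrak m$ is the graded maximal ideal. Granting this, the chain closes: the rank-jumping locus is empty $\iff \mathcal{E}_A=\emptyset \iff H^j_{\mathfrak m}(S_A)=0$ for all $j<d \iff \mathrm{depth}\,S_A = d = \dim S_A \iff S_A$ is Cohen--Macaulay, the last two steps being the standard local-cohomology characterization of depth and the definition of Cohen--Macaulayness.

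The main obstacle is the middle step: proving that higher Euler--Koszul homology both detects the rank jump and is governed by sub-top local cohomology. Establishing that each $H_i(S_A;\beta)$ is holonomic, and that their nonvanishing is equivalent to (and measures) the excess of $\rank(H_A(\beta))$ over the generic value, requires the full theory of the Euler--Koszul complex, including its behaviour under short exact sequences of toric modules and a filtration argument reducing to faces of the configuration. Relating the homology to $H^j_{\mathfrak m}(S_A)$ then rests on a duality argument together with careful bookkeeping of the $\ZZ^{d}$-grading, so that a jump at $\beta$ corresponds precisely to $-\beta$ sitting in the support of some non-top local cohomology module. By contrast, the baseline (generic rank and semicontinuity) is already available, and the depth/Cohen--Macaulay dictionary is formal.
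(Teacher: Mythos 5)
The paper does not prove this statement at all: it is imported verbatim as \cite[Corollary 9.2]{Matusevich2005}, and your outline is a faithful reconstruction of that reference's actual argument (upper semicontinuity of holonomic rank making the generic value $|E|!\,\vol(\newt[\cG])$ a lower bound, detection of rank jumps by nonvanishing higher Euler--Koszul homology of $S_A$, and identification of the jumping locus with the quasidegrees of the non-top local cohomology modules $H^j_{\mathfrak m}(S_A)$, $j<d$, whose simultaneous vanishing is the depth characterization of Cohen--Macaulayness). So your proposal takes essentially the same route as the cited source, modulo the sign/shift convention relating $\beta$ to $\mathcal{E}_A$ and the substantial work, which you correctly flag, of establishing the Euler--Koszul machinery itself.
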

The above theorem is one of the main reasons to study the Cohen-Macaulay property of GKZ systems and Feynman integrals. However, Cohen-Macaulayness also affects the singularities, as we will now describe.

The \emph{singular locus} (or Landau singularities) of a GKZ system is given by the \emph{principal A-determinant} \cite[Chapter 10]{gelfand2008discriminants}:
\begin{equation}\label{eq:EaFact}
    E_A(\cG):=\pm\prod_{\Gamma\subseteq \newt[\cG]}\Delta_{A\cap\Gamma}(\cG)^{\mu_\Gamma}
\end{equation}
where $\Delta_{A\cap\Gamma}(\cG):=\Delta_{A\cap\Gamma}(\left.\cG\right|_\Gamma)$ is the \emph{A-discriminant} of the polynomial $\left.\cG\right|_\Gamma$ which is the coordinate restriction of $\cG$ supported on $\Gamma$. The powers $\mu_\Gamma$ are positive integers given by the multiplicity of the toric variety defined by $A$ along the toric variety defined by $A\cap\Gamma$ and are calculable as lattice indices. Under the Cohen-Macaulay assumption, the principal $A$-determinant describes the full characteristic cycle \cite{Gelfand1989,adolphson1994hypergeometric}:
\begin{equation}
    CC(H_A(\beta))=\rank(H_A(\beta))\Con(0)+\sum_{\Gamma\subseteq\newt[\cG]}\mu_\Gamma\Con(\Delta_{A\cap\Gamma}(\cG))
\end{equation}
where $\Con(\cdot)$ denotes the conormal variety \cite{gelfand2008discriminants}. The characteristic cycle provides the full microlocal description of the singularities. When the system is not a GKZ system, or when the semigroup ring is not Cohen-Macaulay, calculating the characteristic cycle is a challenging problem. It is only very recently that general algorithms for doing this have been constructed \cite{Helmer:2025yuf,Matsubara-Heo:2025lrq,Helmer:2025ljj}.

We conclude this section with an example of the massive triangle Feynman integral:
\begin{example}
For the triangle graph in Figure \ref{fig: triangle graph}, the two Symanzik polynomials in \eqref{eq:F_Symanzik} are:
\begin{equation}
\begin{split}
\mathcal{U}(x)&=x_1+x_2+x_3,\\
\mathcal{F}(p_i^2,m_i^2;\,x)&
= (x_1+x_2+x_3)( m_1^2x_1 + m_2^2x_2 + m_3^2x_3)-p_1^2x_1x_2
- p_2^2x_2x_3
- p_3^2x_1x_3,
\end{split}
\end{equation}
which yield the following form of the $A$-matrix:
\begin{equation*}
    A=\begin{pmatrix}
        1&1&1&1&1&1&1&1&1\\
        1&0&0&1&1&0&2&0&0\\
        0&1&0&1&0&1&0&2&0\\
        0&0&1&0&1&1&0&0&2
    \end{pmatrix}.
\end{equation*}
It follows from e.g. \cite[Theorem 1.1]{Tellander:2021xdz} or \cite[Theorem 5.4]{Dlapa:2023cvx} that the semigroup generated by $A$ is normal, and hence Cohen-Macaulay. By Theorem \ref{thm: rank=vol} the rank of this GKZ system equals the simplicial volume of the Newton polytope, depicted in Figure \ref{fig: newton polytope}, for every possible $\beta$. In particular we get $\mathrm{rank}(H_A(\beta))=3!\cdot 7/6=7$. This coincides with the master integrals in physics: the triangle itself, three different bubbles (obtained from contracting one edge), and three different tadpoles (obtained from contracting two edges).

The discriminants contributing to the principal $A$-determinant in \eqref{eq:EaFact} are listed in e.g. \cite[Section 4.2]{Dlapa:2023cvx}. The multiplicities $\mu_\Gamma$ are equal to one for every face in this example.
\begin{figure}[t] 
    \begin{subfigure}[t]{0.45\textwidth}
    \centering 
    \vspace*{-3.86cm}
       \begin{tikzpicture}
\begin{feynman}
\vertex (e){\(p_1\)};
\vertex [right = of e] (a);
\vertex [below right = of a] (b);
\vertex [above right = of a] (c);
\vertex [below right = of b] (e1){\(p_2\)};
\vertex [above right = of c] (e2){\(p_3\)};
\diagram* {
	(e) -- [fermion] (a) -- [edge label'=\(x_2\)] (b) -- [edge label'=\(x_3\)] (c) --[edge label'=\(x_1\)] (a),
	(e1) -- [fermion] (b),
	(e2) -- [fermion] (c),
};
\end{feynman}
\end{tikzpicture}
\caption{}\label{fig: triangle graph}
    \end{subfigure}\hspace*{0.5cm}
    \begin{subfigure}[t]{0.45\textwidth}
    \centering
       \begin{tikzpicture}%
	[x={(-1cm, -0.395971cm)}, 
	y={(1cm, -0.401902cm)}, 
	z={(0.017418cm, 1cm)},
	scale=2.000000,
	back/.style={loosely dotted, thin},
	edge/.style={color=black, thick},
	facet/.style={fill=green,fill opacity=0.200000},
	vertex/.style={inner sep=1pt,circle,draw=blue!25!black,fill=blue!75!black,thick},
    baseline=-\the\dimexpr\fontdimen22\textfont2\relax]

\coordinate (0.00000, 0.00000, 1.00000) at (0.00000, 0.00000, 1.00000);
\coordinate (0.00000, 0.00000, 2.00000) at (0.00000, 0.00000, 2.00000);
\coordinate (0.00000, 1.00000, 0.00000) at (0.00000, 1.00000, 0.00000);
\coordinate (2.00000, 0.00000, 0.00000) at (2.00000, 0.00000, 0.00000);
\coordinate (0.00000, 2.00000, 0.00000) at (0.00000, 2.00000, 0.00000);
\coordinate (1.00000, 0.00000, 0.00000) at (1.00000, 0.00000, 0.00000);
\coordinate (0.00000, 0.00000, 0.00000) at (0.00000, 0.00000, 0.00000);
\draw[edge,back] (0.00000, 0.00000, 1.00000) -- (0.00000, 0.00000, 2.00000);
\draw[edge,back] (0.00000, 0.00000, 1.00000) -- (0.00000, 1.00000, 0.00000);
\draw[edge,back] (0.00000, 0.00000, 1.00000) -- (1.00000, 0.00000, 0.00000);
\draw[edge,back] (0.00000, 1.00000, 0.00000) -- (0.00000, 2.00000, 0.00000);
\draw[edge,back] (0.00000, 1.00000, 0.00000) -- (1.00000, 0.00000, 0.00000);
\draw[edge,back] (2.00000, 0.00000, 0.00000) -- (1.00000, 0.00000, 0.00000);
\node[vertex] at (0.00000, 1.00000, 0.00000)     {};
\node[vertex] at (1.00000, 0.00000, 0.00000)     {};
\node[vertex] at (0.00000, 0.00000, 1.00000)     {};
\node[vertex] at (0.00000, 1.00000, 1.00000)     {};
\node[vertex] at (1.00000, 1.00000, 0.00000)     {};
\node[vertex] at (1.00000, 0.00000, 1.00000)     {};

\fill[facet] (0.00000, 2.00000, 0.00000) -- (0.00000, 0.00000, 2.00000) -- (2.00000, 0.00000, 0.00000) -- cycle {};
\draw[edge] (0.00000, 0.00000, 2.00000) -- (2.00000, 0.00000, 0.00000);
\draw[edge] (0.00000, 0.00000, 2.00000) -- (0.00000, 2.00000, 0.00000);
\draw[edge] (2.00000, 0.00000, 0.00000) -- (0.00000, 2.00000, 0.00000);

\node[vertex] at (0.00000, 0.00000, 2.00000)     {};
\node[vertex] at (2.00000, 0.00000, 0.00000)     {};
\node[vertex] at (0.00000, 2.00000, 0.00000)     {};
\end{tikzpicture}
\caption{}\label{fig: newton polytope}
\end{subfigure} 
    \caption{Triangle Feynman graph (A) and its Newton polytope of of volume $7/6$ (B).}
\end{figure}
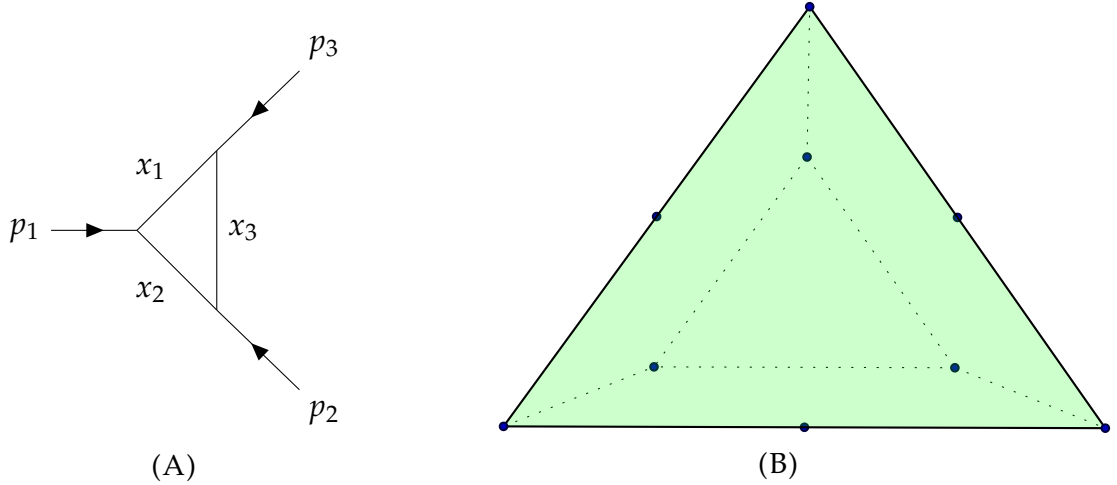
\end{example}
For more background on applying the GKZ framework to Feynman integrals in Lee-Pomeransky parametrisation, we refer to \cite{delaCruz:2019skx, Klausen:2019hrg, Tellander:2021xdz, Ananthanarayan:2022ntm,Munch:2022ouq,Dlapa:2023cvx}.

\section{Affine semigroups}\label{sec: affine semigroups}
We define an \emph{affine semigroup} as a finitely generated submonoid of $\ZZ^d$. For a subset $A=\{\mathbf{a}_1,\ldots,\mathbf{a}_n\}\subset\ZZ^d\setminus\{0\}$ we denote by $Q:=\NN A$ the affine semigroup containing all non-negative linear integer combinations of the elements in $A$. With abuse of notation, we will also denote by $A=(\mathbf{a}_1 \cdots \mathbf{a}_n)$ the $(d\times n)$-matrix with columns $\mathbf{a}_i$. Further, we will assume that $A$ is pointed, i.e. the origin is the only linear subspace of the underlying cone $\RR_{\ge 0}A=\{\sum_{i=1}^nc_i\mathbf{a}_i\,|\,c_i\in\RR_{\ge 0}\}$. A subset $F$ of $A$ is called a \emph{face} if $\RR_{\ge 0}F$ is a face of the cone $\RR_{\ge 0}A$. Similarly, we will refer to $F$ as a face of the semigroup $\NN A$ to indicate that $\NN F$ is a face of $\NN A$.
\begin{definition}\label{def: normal}
    A semigroup $\NN A$ is said to be \emph{normal} if it is saturated, i.e. if
    \begin{equation}
        \NN A=\ZZ A\cap \RR_{\ge 0}A.
    \end{equation}
\end{definition}
\noindent If an affine semigroup $Q$ is not normal, it has a non-empty set of \emph{holes}: $\cH(Q):=(\ZZ A\cap \RR_{\ge 0} Q)\setminus Q$.

Let $k$ be an algebraically closed field, the \emph{affine semigroup ring} 
\begin{equation}
 k[Q]=k[t^{\mathbf{a}_1},\ldots,t^{\mathbf{a}_n}]:=\bigoplus_{\mathbf{a}\in\NN A}k\cdot t^{\mathbf{a}}   
\end{equation}
is a subring of the the ring of Laurent polynomials $k[t_1^{\pm 1},\ldots,t_d^{\pm 1}]$. Let $R=k[y_1,\ldots,y_n]$ be a polynomial ring and $I_A=\avg{y^\mathbf{u}-y^{\mathbf{v}}\,|\, A\mathbf{u}=A\mathbf{v},\ \mathbf{u},\mathbf{v}\in\NN^n}\subseteq R$ the toric ideal associated to $A$, and $S_A=R/I_A$ its quotient ring. The quotient ring $S_A$ is isomorphic to the affine semigroup ring of $A,\ S_A\simeq k[\NN A],$ where multiplication in $k[\NN A]$ is given by semigroup addition of the exponents.

A \emph{free resolution of} $S_A$ \emph{of length} $l$ is a degree-preserving exact sequence of finitely generated $R$-modules:
\begin{equation}
    0\longleftarrow R/I_A\longleftarrow R=F^0\longleftarrow F^1\longleftarrow\cdots\longleftarrow F^l\longleftarrow 0
\end{equation}
such that every $F^j$ is a free $R$-module. A free resolution is \emph{minimal} when the ranks of $F^j$ are taken to be as small as possible. It follows directly from Hilbert's syzygy theorem and Auslander-Buchsbaum's formula (see e.g.~\cite{Eisenbud1995}) that the length of a minimal free resolution is bounded from above and below:
\begin{equation}
    n-d\le l\le n.
\end{equation}
When the lower bound is saturated, the ideal (and the module $R/I$) is said to be \emph{Cohen-Macaulay}, for more background, see \cite{Bruns1993}.
\begin{definition}[Cohen-Macaulay]
A homogeneous ideal $I$ (and the module $R/I$) is said to be \emph{Cohen-Macaulay} if the length $l$ of the minimal free resolution of $R/I$ is $n-d$.
\end{definition}
\begin{remark}
A toric ideal $I_A$ is homogeneous if and only if $(1,\ldots,1)$ is in the $\QQ$-row-span of $A$.
\end{remark}
We now recall a famous theorem of Hochster \cite{Hochster1972}:
\begin{theorem}\label{thm: Hochster}
    If the semigroup $Q$ is normal, then $k[Q]$ is Cohen-Macaulay.
\end{theorem}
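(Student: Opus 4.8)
The plan is to prove Cohen--Macaulayness by showing that the local cohomology of $k[Q]$ with respect to its maximal graded ideal is concentrated in a single cohomological degree, extracting that vanishing from the convex geometry of the cone $C:=\RR_{\ge 0}A$. Since $k[Q]$ is a finitely generated graded $k$-algebra and a domain of Krull dimension $d:=\dim C$, it is Cohen--Macaulay if and only if $\mathrm{depth}\,k[Q]=d$, equivalently $H^i_{\mathfrak m}(k[Q])=0$ for all $i\ne d$, where $\mathfrak m=\bigoplus_{\mathbf a\in\NN A\setminus\{0\}}k\cdot t^{\mathbf a}$. After replacing the ambient lattice by $\ZZ A$ I may assume $\ZZ A=\ZZ^d$, so that normality reads $Q=\ZZ^d\cap C$ and $k[Q]$ is finely graded by $\ZZ^d$.

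The first main step is to compute $H^\bullet_{\mathfrak m}(k[Q])$ combinatorially using the $\ZZ^d$-grading. I would use the Ishida (\v Cech-type) complex, whose term in cohomological degree $j$ is $\bigoplus_{\dim F=j}k[Q]_F$, the sum running over the faces $F$ of $C$ and $k[Q]_F=k[Q+\ZZ F]$ being the localization inverting the monomials supported on $F$; its cohomology is $H^\bullet_{\mathfrak m}(k[Q])$. In a fixed degree $\mathbf a\in\ZZ^d$ each summand $(k[Q]_F)_{\mathbf a}$ equals $k$ when $\mathbf a\in Q+\ZZ F$ and $0$ otherwise, so the degree-$\mathbf a$ strand is a simplicial cochain complex of faces. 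Normality makes the membership condition explicit: $Q+\ZZ F=\ZZ^d\cap(C+\RR F)$, so $\mathbf a\in Q+\ZZ F$ exactly when $\mathbf a$ satisfies all facet inequalities of $C$ whose facets do not contain $F$. Consequently the strand computes, up to the usual reindexing, the reduced cohomology $\widetilde H^{\bullet}(\Delta_{\mathbf a};k)$ of the subcomplex $\Delta_{\mathbf a}$ of the boundary sphere $\partial C\cong S^{d-2}$ given by the union of those facets whose defining inequality $\mathbf a$ violates.

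The geometric heart of the argument is then to show each $\Delta_{\mathbf a}$ is acyclic unless it is the whole sphere. Here convexity is decisive: the facets of $C$ violated by $\mathbf a$ are precisely those ``visible'' from the direction $\mathbf a$, and for a convex cone the union of visible facets is empty, a contractible (shellable) ball, or all of $\partial C$. In the first case $\mathbf a$ lies in $C\cap\ZZ^d=Q$ and the strand is exact; in the second, $\widetilde H^{\bullet}(\Delta_{\mathbf a};k)=0$; only in the third case, where $\mathbf a$ violates every facet inequality (equivalently $-\mathbf a$ lies in the relative interior of $C$), does one get $\widetilde H^{d-2}(S^{d-2};k)=k$, which feeds the top cohomology $H^d_{\mathfrak m}(k[Q])$. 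Collecting these computations over all $\mathbf a\in\ZZ^d$ gives $H^i_{\mathfrak m}(k[Q])=0$ for every $i\ne d$, which is exactly the Cohen--Macaulay condition.

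I expect the principal obstacle to be the first main step: setting up the Ishida complex with the correct grading and sign conventions and rigorously identifying its degree-$\mathbf a$ strand with the reduced cochain complex of the visible-facet subcomplex $\Delta_{\mathbf a}$, including the precise cohomological shift and the boundary bookkeeping at the faces $\{0\}$ and $C$. The geometric input of the third step---that the visible region of the boundary of a convex body is contractible---is classical and underlies Bruggesser--Mani shellings, so once the dictionary between local cohomology and face combinatorics is in place, the conclusion follows cleanly; the only care needed is to confirm that normality is exactly what reduces the membership conditions to facet inequalities, since without it $Q+\ZZ F$ would be a proper subset of $\ZZ^d\cap(C+\RR F)$ and the topological identification would break down.
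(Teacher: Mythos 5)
The paper does not prove this statement at all: it is quoted as a classical result with a citation to Hochster (1972), so there is no internal argument to compare against. Your route --- computing $H^\bullet_{\mathfrak{m}}(k[Q])$ via the Ishida complex and showing that every $\ZZ^d$-graded strand is exact outside cohomological degree $d=\dim\RR_{\ge 0}A$ --- is the standard modern proof (Bruns--Herzog, Thm.~6.3.5; Miller--Sturmfels, Ch.~13) and is sound in outline; it is quite different in flavour from Hochster's original combinatorial induction, and it buys more, namely the full graded description of $H^d_{\mathfrak{m}}(k[Q])$ and hence of the canonical module as the ideal of interior lattice points. Two points need attention when you write it out. First, your membership criterion is reversed: writing $C=\{x:\langle\sigma_i,x\rangle\ge 0\}$ with $\sigma_i$ the inward normal of the facet $F_i$, localizing at a face $F$ discards exactly the inequalities of the facets \emph{not} containing $F$, so normality gives $\mathbf{a}\in Q+\ZZ F=\ZZ^d\cap(C+\RR F)$ if and only if $\mathbf{a}$ satisfies the inequalities of the facets that \emph{do} contain $F$; as literally stated (``do not contain $F$'') your condition would make the degree-$0$ term of the strand always nonzero and the top term conditional, which is upside down. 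Your subsequent identification of the strand with the reduced cochain complex of the violated-facet subcomplex $\Delta_{\mathbf{a}}$ is nevertheless the correct one, so this is a local slip rather than a structural flaw, but it must be fixed for the indexing to come out right. Second, the trichotomy ``empty / contractible / all of $\partial C$'' carries essentially all of the geometric content: that the union of the facets with $\langle\sigma_i,\mathbf{a}\rangle<0$ is contractible whenever it is a proper nonempty subcomplex is the Bruggesser--Mani visibility lemma applied to a cross-section polytope, and it deserves an explicit statement and proof (including the degenerate cases where $\mathbf{a}$ lies on some facet hyperplanes), since convexity enters the argument nowhere else.
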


Let $F_1,\ldots,F_m$ be the facets of $Q$ and we denote by $\widetilde{Q}:=\cap_{i=1}^m(Q-\NN F_i)$. For every non-empty subset $J\subset\{1,\ldots,m\}$ set
\begin{equation}
    G_J=\bigcap_{i\notin J}(Q-\NN F_i)\setminus\bigcup_{j\in J}(Q-\NN F_j).
\end{equation}
We let $\pi_J$ be the simplicial complex of non-empty subset $I$ of $J$ such that $\cap_{i\in I}F_i$ is a non-empty face of $Q$. A complex $\pi_J$ is \emph{acyclic} if its reduced homology group is zero for all indices. We now state the Main Theorem in \cite{Trung1986}:
\begin{theorem}\label{thm: Cohen-Macaulay Trung Hoa}
    Let $Q$ be an arbitrary affine semigroup. Then $k[Q]$ is Cohen-Macaulay if and only if the following conditions are satisfied:
    \begin{itemize}
        \item[(i)] $\widetilde{Q}=Q$, and
        \item[(ii)] for every non-empty proper subset $J$ of $\{1,\ldots,m\}$, $G_J=\emptyset$ or $\pi_J$ is acyclic.
    \end{itemize}
\end{theorem}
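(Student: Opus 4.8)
The plan is to characterize Cohen--Macaulayness of $k[Q]$ through the vanishing of local cohomology and to read off the two conditions by computing that cohomology degree-by-degree in the $\ZZ^d$-grading. Since $k[Q]$ is a pointed $\ZZ^d$-graded domain of dimension $d=\rank\ZZ A$, it is Cohen--Macaulay if and only if $H^i_{\mathfrak m}(k[Q])=0$ for all $i<d$, where $\mathfrak m=\bigoplus_{0\neq a\in Q}k\,t^{a}$ is the graded maximal ideal (Grothendieck vanishing handles $i>d$, and $H^0_{\mathfrak m}=0$ as $k[Q]$ is a domain). Each $H^i_{\mathfrak m}(k[Q])$ inherits a $\ZZ^d$-grading, so it suffices to decide when every graded piece $H^i_{\mathfrak m}(k[Q])_a$ vanishes for $i<d$.

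To access these pieces I would use the Ishida (localization) complex attached to the faces of the cone $\RR_{\ge 0}A$,
\[
0 \longrightarrow k[Q] \longrightarrow \bigoplus_{\dim F=1} k[Q]_F \longrightarrow \cdots \longrightarrow \bigoplus_{\dim F=d} k[Q]_F \longrightarrow 0,
\]
where $k[Q]_F=\bigoplus_{a\in Q-\NN F}k\,t^{a}$ is the localization inverting the face $F$ and the differentials are signed sums of the natural localization maps; its cohomology computes $H^\bullet_{\mathfrak m}(k[Q])$. Fixing a degree $a\in\ZZ A$ collapses this to a finite complex of $k$-vector spaces in which a face $F$ contributes a copy of $k$ exactly when $a\in Q-\NN F$. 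Writing each face as the intersection $F=\bigcap_{i\in I_F}F_i$ of the facets containing it, and setting $J(a)=\{\,j : a\notin Q-\NN F_j\,\}$ for the \emph{bad} facets, the decisive identity to establish is that $a\in Q-\NN F$ if and only if $I_F\cap J(a)=\emptyset$. Granting this, the surviving faces are precisely those contained in no bad facet, and the degree-$a$ strand is identified, up to a degree shift and reduced-versus-ordinary bookkeeping, with the chain complex computing the reduced homology of the simplicial complex $\pi_{J(a)}$.

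With this dictionary the two conditions fall out by stratifying $\ZZ A$ according to the value $J=J(a)$. When $J=\emptyset$ every face survives, the strand is the cochain complex of the full, contractible face lattice, and its reduced cohomology vanishes in all degrees $<d$ exactly when $a\in Q$; the only possible obstruction comes from $a\in\widetilde{Q}\setminus Q$, so this stratum is harmless in all degrees $i<d$ if and only if $\widetilde{Q}=Q$, which is condition (i). When $J$ is the full index set the only surviving face is the cone itself, contributing solely to $H^d_{\mathfrak m}$, so this stratum is automatic. For a proper non-empty $J$ the surviving subcomplex depends only on $J$ and not on the individual $a$, so all degrees in $G_J$ give isomorphic strands; hence the associated local cohomology vanishes in degrees $<d$ either because there are no such degrees, $G_J=\emptyset$, or because the governing complex $\pi_J$ is acyclic. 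Assembling the strata yields condition (ii), and conversely necessity of each condition follows by exhibiting a degree in the offending stratum, giving the claimed equivalence.

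The main obstacle is the second paragraph: proving the localization identity $a\in Q-\NN F\iff I_F\cap J(a)=\emptyset$ (the backward direction is the subtle one, amounting to the statement that simultaneous facet-localizations detect membership, i.e.\ that $\bigcap_{i}(Q-\NN F_i)$ behaves like a single face localization), and then matching the face-poset strand to the reduced homology of $\pi_J$ with the correct shift and sign conventions. Pinning down exactly which homological degree the holes $\widetilde{Q}\setminus Q$ occupy, and checking that \emph{full} acyclicity of $\pi_J$ (rather than mere vanishing below the top) is precisely what $H^i_{\mathfrak m}(k[Q])_a=0$ for all $i<d$ demands, is the delicate combinatorial-homological bookkeeping at the heart of the proof; the remaining stratification and constancy-over-$G_J$ arguments are then formal.
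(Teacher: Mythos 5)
First, note that the paper does not prove this statement at all: it is quoted verbatim as the Main Theorem of the cited work of Trung and Hoa, so there is no in-paper proof to compare against. Your overall strategy --- characterize Cohen--Macaulayness by vanishing of $H^i_{\mathfrak m}(k[Q])$ for $i<d$, compute it via the Ishida face-localization complex, decompose by $\ZZ^d$-degree, and match each degree-$a$ strand to the reduced homology of a simplicial complex determined by the facet localizations containing $a$ --- is indeed the standard route taken in the literature (both in the original proof and in the Matusevich--Yu reformulation the paper uses as Theorem \ref{thm: Cohen-Macaulay Matusevich Yu}). So the architecture is right.

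However, there is a genuine gap, and it sits exactly where you locate it: the ``decisive identity'' $a\in Q-\NN F\iff I_F\cap J(a)=\emptyset$, i.e.\ $Q-\NN F=\bigcap_{i:\,F\subseteq F_i}(Q-\NN F_i)$. The inclusion $\subseteq$ is trivial, but the reverse inclusion is precisely the kind of statement that holds for \emph{normal} semigroups (where $Q-\NN F$ is cut out by the facet inequalities for the facets containing $F$) and fails in general: for the minimal face $F=\{0\}$ it literally reads $Q=\widetilde Q$, which is condition (i), so it cannot be an unconditional identity without rendering (i) vacuous; for intermediate faces it is an additional assertion that must either be proved under hypothesis (i) or circumvented. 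Without it, two things collapse: the set of faces surviving in the degree-$a$ strand is a filter in the face lattice that is \emph{not} determined by its facet members, so the strand need not depend only on $J(a)$, and the stratification into the sets $G_J$ no longer controls the local cohomology. A second, smaller unaddressed step is the identification of the surviving sub-poset of the face lattice with the abstract complex $\pi_J$ on facet-index subsets: distinct subsets $I$ can define the same face $\bigcap_{i\in I}F_i$, so this is a nerve-type comparison, not mere sign bookkeeping. As written, the proposal is an accurate roadmap of the known proof with its central lemma left unproved, and that lemma (or a substitute for it) is where essentially all the work of the theorem lives.
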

Much more recently, a slightly modified version of this theorem was proven in \cite{Matusevich2024}, which we recall below, and this will be the version of this theorem we actually use. 

Let $\cA=\{\scrH_i\}_{i=1}^m$ be a finite collection of hyperplanes passing through the origin. A \emph{region} in $\RR^d$ is a connected component $\frakr$ of $\RR^d\setminus\bigcup_{\scrH\in\cA}\scrH$ and the set containing all regions over $\cA$ is denoted by $\frakr(\cA)$. For any subset $J$ of $\{1,\ldots,m\}$ we define the \emph{cumulative region} as
\begin{equation}
    \frakR_J:=\bigcup_{i\in J}\scrH^+_i
\end{equation}
where $\scrH^+$ denotes the positive half-space including its boundary. We are now ready to rephrase Theorem \ref{thm: Cohen-Macaulay Trung Hoa}:
\begin{theorem}[{\cite[Theorem 6.3]{Matusevich2024}}] \label{thm: Cohen-Macaulay Matusevich Yu}
Let $\cA$ be the collection of facet-defining hyperplanes $\scrH_1,\ldots,\scrH_m$ of $Q$. Then $k[Q]$ is Cohen-Macaulay if and only if;
    \begin{itemize}
        \item[(i)] $\widetilde{Q}=Q$, and
        \item[(ii)'] for every $J\subseteq\{1,\ldots,m\}$ with $\frakR_J\in\frakr(\cA)$, $\pi_J$ is acyclic.
    \end{itemize}
\end{theorem}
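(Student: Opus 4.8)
The statement is a reformulation of Theorem~\ref{thm: Cohen-Macaulay Trung Hoa}, so the plan is to take the Trung--Hoa criterion as given and show that its condition (ii) is equivalent to (ii)$'$. Since condition (i) is literally the same in both, everything reduces to matching the two families of index sets that govern acyclicity: the Trung--Hoa family $\{J : G_J \neq \emptyset\}$ (ranging over non-empty proper $J$) and the Matusevich--Yu family $\{J : \frakR_J \in \frakr(\cA)\}$. The bridge between them is a dictionary translating the semigroup-theoretic condition ``$G_J \neq \emptyset$'' into the purely geometric condition that a prescribed chamber of the central arrangement $\cA$ is a genuine region.

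First I would pin down the geometry of the localizations $Q - \NN F_i$. For a pointed cone with facet $F_i$ lying on $\scrH_i$, one checks that $\RR_{\ge 0}(Q - \NN F_i) = \scrH_i^+$, the closed half-space bounded by $\scrH_i$ that contains $\RR_{\ge 0}Q$; consequently membership $a \in Q - \NN F_i$ forces $\ell_i(a) \ge 0$ (writing $\ell_i$ for the facet functional with $\scrH_i = \{\ell_i = 0\}$), while $\ell_j(a) < 0$ already guarantees $a \notin Q - \NN F_j$. Feeding this into the definition of $G_J$ shows that every $a \in G_J$ lies, at the level of signs, in the chamber cut out by $\ell_j < 0$ for $j \in J$ and $\ell_i \ge 0$ for $i \notin J$. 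The next step is to identify this chamber with the cumulative-region data: $\frakR_J = \bigcup_{i\in J}\scrH_i^+$ should be a region of $\cA$ exactly when the sign pattern attached to $J$ is realizable as a single unsubdivided chamber, which is the cone-level shadow of $G_J \neq \emptyset$.

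With the dictionary in hand, I would prove the two implications. Assuming (ii)$'$, take a non-empty proper $J$ with $G_J \neq \emptyset$; the dictionary produces a region of $\cA$ forcing, after the index bookkeeping, the relevant $\pi_J$ to be among those required acyclic by (ii)$'$, so (ii) holds. Conversely, assuming (ii), take $J$ with $\frakR_J \in \frakr(\cA)$ and show $G_J \neq \emptyset$, so that (ii) delivers acyclicity of $\pi_J$; the boundary indices $J = \emptyset$ and $J = \{1,\dots,m\}$, which (ii)$'$ may formally include but Trung--Hoa excludes, must be dispatched separately by checking that they either fail $\frakR_J \in \frakr(\cA)$ or have automatically acyclic $\pi_J$.

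The main obstacle is the gap between lattices and cones: $G_J$ is defined by the semigroup membership conditions $a \in Q - \NN F_i$, whereas $\frakR_J \in \frakr(\cA)$ sees only the real arrangement. A point $a$ with $\ell_j(a) \ge 0$ can still fail to lie in $Q - \NN F_j$ because it is a hole of the localization, so the naive sign analysis does not immediately certify that $G_J$ is non-empty precisely when the associated chamber is a region. Closing this gap is exactly where hypothesis (i), $\widetilde{Q} = Q$, and the finiteness and facet-wise structure of the holes $\cH(Q)$ must be used: under (i) the localizations glue back to $Q$, which rigidifies the correspondence between non-empty $G_J$ and realizable chambers. I expect the careful matching of the two index families across this lattice--cone divide, together with the endpoint reconciliation, to be the technical heart of the argument.
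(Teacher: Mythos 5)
The first thing to say is that the paper does not prove this statement at all: it is imported verbatim as \cite[Theorem 6.3]{Matusevich2024} (together with Theorem~\ref{thm: Cohen-Macaulay Trung Hoa} from \cite{Trung1986}), so there is no in-paper argument to compare yours against. Judged on its own, your proposal is a strategy outline rather than a proof, and the gap sits exactly where the content of the theorem lives. Your plan hinges on a ``dictionary'' identifying the Trung--Hoa index family $\{J : G_J \neq \emptyset\}$ with the family $\{J : \frakR_J \in \frakr(\cA)\}$, but the only implication you actually establish is the easy half of the sign analysis: $a \in Q - \NN F_i$ forces $a$ into the closed positive side of $\scrH_i$. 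The converse fails pointwise --- $a \notin Q - \NN F_j$ does \emph{not} place $a$ on the open negative side of $\scrH_j$, because the localization $Q - \NN F_j$ has holes --- and you acknowledge this yourself, deferring its resolution to ``where hypothesis (i) must be used.'' That deferred step is not a technical footnote; it is the theorem. Without it you have neither direction of the equivalence between (ii) and (ii)$'$.

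Moreover, the dictionary is probably the wrong target. The two criteria can be equivalent as characterizations of Cohen--Macaulayness without the two index families coinciding: it would suffice, say, that every $J$ with $G_J \neq \emptyset$ but $\frakR_J \notin \frakr(\cA)$ has $\pi_J$ automatically acyclic, or cannot occur once (i) holds, and symmetrically. Proving an actual bijection of the families is stronger than needed, and in the direction ``$\frakR_J \in \frakr(\cA) \Rightarrow G_J \neq \emptyset$'' it requires exhibiting a lattice point realizing a prescribed pattern of localization memberships and non-memberships --- again the hard combinatorial core, which your sketch does not supply. Note also that, as transcribed in the paper, $\frakR_J$ is a union of \emph{closed} half-spaces and therefore contains the hyperplanes $\scrH_i$ for $i \in J$, so it can never literally equal an open region in $\frakr(\cA)$; any honest proof must first pin down the intended reading of (ii)$'$ before the comparison with $G_J$ can even be set up. In short, the reduction-to-Trung--Hoa framing is a reasonable opening move, but every step that distinguishes this statement from Theorem~\ref{thm: Cohen-Macaulay Trung Hoa} remains open in your write-up; the honest course here is to cite \cite{Matusevich2024}, as the paper does.
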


\section{Graphs and edge polytopes}\label{sec: graphs and edge polytopes}
The semigroup rings arising from one-loop Feynman integrals carry a natural description as \emph{edge rings} in the sense of \cite{ohsugi1998normal}, this description allowed for the complete classification of normality in \cite{Dlapa:2023cvx}. For completeness we describe some of the necessary constructions from this paper we use later.

\begin{definition}\label{edge matrix}[Edge Matrix] Let $H=(E,V)$ be a finite connected graph with vertex set $V=\{0,\ldots,d\}$. If $e=\{i,j\}$ is an edge of $H$ joining vertices $i$ and $j$ we define $\rho(e)\in\RR^{d+1}$ by $\rho(e)=\mathbf{e}_i+\mathbf{e}_j$ where $\mathbf{e}_i$ is the $i$th unit vector in $\RR^{d+1}$. Let $M$ be the matrix whose columns correspond to the finite set $\{\rho(e):\ e\in E\}$, then $M$ is called the \emph{edge matrix} of $H$ and the convex hull of $M$ is called the \emph{edge polytope}.
\label{def:Edge_Matrix}
\end{definition}
To the graph $H$ we associate the subalgebra of $k[t_0,\ldots,t_d]$ generated by all quadratic monomials $t_it_j$ such that $\{i,j\}$ is an edge of $H$ and $t_i^2$ such that $H$ has a loop at vertex $i$. We denote this affine semi group ring by $k[H]$ and call it the \emph{edge ring} of $H$. For this type of rings, \cite{ohsugi1998normal} provides a necessary and sufficient condition for normality in terms of the \emph{odd cycle condition}:
\begin{definition}[Odd cycle condition]\label{def:oddCycle}
A cycle in a graph is called \emph{minimal} if it has no chord and it is said to be \emph{odd} if it is a cycle with odd length. A graph $H$ satisfies the \emph{odd cycle condition} if for two arbitrary minimal odd cycles $C$ and $C'$, either
\begin{itemize}
    \item $C$ and $C'$ have a common vertex, or
    \item there is an edge connecting a vertex of $C$ with a vertex of $C'$.
\end{itemize}
\end{definition}
We may now state one of the main results of \cite{ohsugi1998normal}: 
\begin{theorem}[Corollary 2.3, \cite{ohsugi1998normal}]\label{thm: odd-cycle}
 Let $M$ be the edge matrix of a graph $H$. The semi-group $\NN M$ is normal if and only if the graph $H$ satisfies the odd cycle condition.   
\end{theorem}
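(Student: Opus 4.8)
The plan is to recast the saturation condition $\NN M=\ZZ M\cap\RR_{\ge 0}M$ as a statement about \emph{edge weightings} and then prove the two implications separately, with the direction ``odd cycle condition $\Rightarrow$ normal'' being the substantial one. Let $\partial\colon\RR^E\to\RR^{V}$ be the unsigned incidence map $\partial(c)_v=\sum_{e\ni v}c_e$, so that $M$ is its matrix and $\sum_e c_e\rho(e)=\partial(c)$. Then $\alpha\in\RR_{\ge 0}M$ exactly when the \emph{fiber polytope} $P_\alpha=\{c\ge 0:\partial c=\alpha\}$ is nonempty (a fractional cover), while $\alpha\in\NN M$ exactly when $P_\alpha$ contains a lattice point. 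I would first record the lattice: since $H$ is connected, alternating sums along paths give $\ZZ M=\{x\in\ZZ^{V}:\sum_v x_v\ \text{even}\}$ when $H$ is non-bipartite and the balanced lattice $\{x:\sum_{V_1}x=\sum_{V_2}x\}$ when $H$ is bipartite with parts $V_1,V_2$; in particular $\tfrac12\sum_{e\in C}\rho(e)=\sum_{v\in C}\e_v$ lies in the cone for any odd cycle $C$, but has odd coordinate sum. Normality thus becomes: every integral demand admitting a fractional cover admits an integral one.

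For \emph{necessity} I argue by contrapositive. Suppose the odd cycle condition fails, so there are vertex-disjoint \emph{minimal} odd cycles $C,C'$ of $H$ with no edge of $H$ between them. Put $\alpha=\sum_{v\in V(C)\cup V(C')}\e_v=\tfrac12\sum_{e\in C\cup C'}\rho(e)$. Then $\alpha\in\RR_{\ge 0}M$, and since $|V(C)|+|V(C')|$ is even, $\alpha\in\ZZ M$. If $\alpha=\sum_e c_e\rho(e)$ with $c_e\in\NN$, then any used edge has both endpoints in $V(C)\cup V(C')$; chordlessness together with the absence of connecting edges forces every used edge to lie within $C$ or within $C'$. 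Restricting to the coordinates of $V(C)$ gives $\sum_{e\in E(C)}c_e\rho(e)=\sum_{v\in V(C)}\e_v$, whose left side has even coordinate sum and right side has odd coordinate sum $|V(C)|$ — a contradiction. Hence $\alpha$ is a hole and $\NN M$ is not saturated.

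For \emph{sufficiency} I assume the odd cycle condition, take $\alpha\in\ZZ M\cap\RR_{\ge 0}M$, and induct on $\sum_v\alpha_v$. It suffices to find one edge $e$ with $\max_{c\in P_\alpha}c_e\ge 1$: then $\alpha-\rho(e)\in\ZZ M\cap\RR_{\ge 0}M$ has smaller coordinate sum, lies in $\NN M$ by induction, and so does $\alpha$. Suppose instead that $c_e<1$ for every $e$ and every $c\in P_\alpha$, and choose a vertex $c^{*}$ of the rational polytope $P_\alpha$. At a vertex the support columns $\{\rho(e):c^{*}_e>0\}$ are linearly independent, i.e.\ each component of the support graph is a tree or carries a unique, necessarily odd, cycle. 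A leaf $v$ of the support would force $c^{*}_e=\alpha_v\in\ZZ_{\ge 1}$, contradicting $c^{*}_e<1$; hence the support has minimum degree $\ge 2$, excluding trees and pendant edges, so every support component is a single odd cycle. On such a cycle the constraints $c^{*}_e<1$ and $\alpha_v=c^{*}_{e}+c^{*}_{e'}\in\ZZ$ force all weights $=\tfrac12$ and $\alpha_v=1$. Thus $\alpha=\mathbf 1_S$ with $S=V(C_1)\sqcup\cdots\sqcup V(C_k)$ a disjoint union of odd cycles of $H$. If $H$ is bipartite there are no odd cycles, so $\alpha=0$ and we are in the base case; otherwise $\alpha\in\ZZ M$ forces $\sum_i|V(C_i)|$ even, hence $k\ge 2$.

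It remains to derive a contradiction from $k\ge 2$ using the odd cycle condition, and this is the step I expect to be the main obstacle. If one of the support cycles, say $C_1$, has a chord $g\in E(H)$, then $g$ together with an odd arc it cuts off closes an even cycle meeting the half-weighted support; pushing weight along it (alternating $\pm 1$, starting from $c_g=0$) reaches a feasible point of $P_\alpha$ at which some edge weight equals $1$, a contradiction (for simple graphs the arc has length $\ge 3$, so a genuine support edge, not $g$, reaches $1$). Otherwise $C_1,C_2$ are chordless and vertex-disjoint, so the odd cycle condition supplies an edge $f=\{a,b\}\in E(H)$ with $a\in V(C_1)$, $b\in V(C_2)$; I then build a competing realization $c'\in P_\alpha$ by setting $c'_f=1$, taking perfect matchings of the even paths $C_1-a$ and $C_2-b$, and keeping weight $\tfrac12$ on every remaining $C_i$. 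This $c'$ is a nonnegative realization of $\alpha=\mathbf 1_S$ with $c'_f=1$, again contradicting $c_e<1$ on $P_\alpha$. In every case some edge attains weight $\ge 1$, closing the induction. The delicate points are arranging chordlessness so the odd cycle condition genuinely applies, and checking that the two local repairs — even-cycle augmentation through a chord versus rematching across a connecting edge — keep the weights nonnegative and inside the fiber; the loop case (a loop being a length-one odd cycle) fits the same scheme.
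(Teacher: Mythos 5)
Your argument is correct, and it is worth noting up front that the paper offers no proof of this statement at all: it is imported verbatim as \cite[Corollary 2.3]{ohsugi1998normal}, so any proof you give is necessarily ``a different route.'' What you have written is a self-contained argument organized around the fiber polytope $P_\alpha=\{c\ge 0:\partial c=\alpha\}$ and an induction that peels off one edge at a time, with the key structural input being the LP fact that at a vertex of $P_\alpha$ the support columns are independent, hence (given integrality of $\alpha$ and the no-edge-reaches-$1$ hypothesis) the support degenerates to a disjoint union of half-weighted odd cycles. Ohsugi--Hibi instead describe the full integral closure of the edge ring (their Theorem 2.2): they show the holes of $\NN M$ are exactly the elements $\sum_{v\in C}\e_v+\sum_{v\in C'}\e_v$ attached to ``exceptional pairs'' of vertex-disjoint, non-bridged minimal odd cycles, and read off normality as the absence of such pairs. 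Their route buys more (an explicit normalization, which the present paper uses elsewhere via $\overline{H}$); yours buys a shorter, more elementary proof of the equivalence itself, and your two local repairs (even-cycle augmentation through a chord; rematching across a connecting edge) are exactly the combinatorial content of their exceptional-pair analysis. Your necessity direction coincides with theirs in substance: the parity obstruction on $\sum_{v\in V(C)}\alpha_v$ only needs the absence of edges between $V(C)$ and $V(C')$, not chordlessness, as you could state more cleanly.

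A few points deserve tightening but are not gaps. The phrase ``alternating $\pm 1$'' should be ``push by $t=\tfrac12$ along the alternating $\pm1$ kernel vector of the even cycle''; as written the weights would leave $[0,1]$. You should make explicit that loops are length-one odd cycles throughout (the paper needs this: the massive bubble counterexample is precisely two unbridged loops), and check the one loop configuration your case split does not literally cover, namely a loop of $H$ at a vertex $v$ of a support cycle $C_1$: it is not a chord, but the rematching trick (weight $\tfrac12$ on the loop plus a perfect matching of the even path $C_1-v$) produces an edge of weight $1$, so it causes no harm. Finally, the description of $\ZZ M$ as the even-coordinate-sum lattice in the connected non-bipartite case is asserted rather than proved; it is standard, but it is the hinge on which ``$k\ge 2$'' turns and should be recorded as a lemma.
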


Let $\overline{H}$ denote the graph obtained by adding all edges $\{i,j\}$ to $H$ such that $\{i,j\}\notin H$ and $H$ has a loop both at $i$ and at $j$. Let $\overline{M}$ and $M$ denote the respective edge matrices, it follows that $\conv(\overline{M})=\conv(M)$, i.e. the edge polytopes are equal. 
\begin{remark}
If $k[\overline{H}]$ is normal, then $k[\overline{H}]$ is a normalization of $k[H]$ which does not change the underlying polytope.    
\end{remark}

For a fully general one-loop integral with $d$ external legs, the $A$-matrix of the associated GKZ-system will have the form
\begin{equation}
    A=\left(
    \begin{array}{c|c|c}
        1\ \cdots\ 1 & 1\ \ \ \cdots\ \ 1 & 1\ \ \ \cdots\ \ 1 \\
        \mathbb{1}_{d\times d} & \begin{array}{ccc}
            1 && 0\\
            1 && 0\\
            0 && 0\\
            \vdots & \cdots & \vdots\\
            0&&1\\
            0&&1
        \end{array}&  \begin{array}{ccc}
            2 && 0\\
            0 && 0\\
            0 && 0\\
            \vdots & \cdots & \vdots\\
            0&&0\\
            0&&2
        \end{array}
    \end{array}
    \right)
\end{equation}
where it is only the first block with the $d\times d$-identity matrix $\mathbb{1}_{d\times d}$ that appears for every integral. The middle block can be any subset of the $d$-choose-2 possibilities of two ones and the right block can be any of the $d$ possibilities to choose a single two. Of course, to have a non-empty toric geometry and hypergeometric system we need at least $d+2$ columns in total (this corresponds to having at least one scaleless parameter in our Feynman integral). The matrix 
\begin{equation}
    S=\left(\begin{array}{cccc}
         2 & -1 & \cdots &-1  \\
         0&&&\\
         \vdots && \mathbb{1}_{d\times d}&\\
         0&&&
    \end{array}\right)
\end{equation}
is clearly invertible and transform the $A$-matrix to
\begin{equation}\label{eq: edge matrix B}
    B:=SA=\left(
    \begin{array}{c|c|c}
        1\ \cdots\ 1 & 0\ \, \ \ \cdots\ \ 0 & 0\ \, \ \ \cdots\ \ 0 \\
        \mathbb{1}_{d\times d} & \begin{array}{ccc}
            1 && 0\\
            1 && 0\\
            0 && 0\\
            \vdots & \cdots & \vdots\\
            0&&1\\
            0&&1
        \end{array}&  \begin{array}{ccc}
            2 && 0\\
            0 && 0\\
            0 && 0\\
            \vdots & \cdots & \vdots\\
            0&&0\\
            0&&2
        \end{array}
    \end{array}
    \right).
\end{equation}
This matrix as column sum two and only contains non-negative integers, so it is an edge matrix in Definition \ref{def:Edge_Matrix}. Since $S\in\mathrm{GL}(d+1,\QQ)$ the toric ideals satisfy $I_A=I_{B}$ and for the associated hypergeometric system we have $H_A(\beta)=H_{B}(S\beta)$. To each one-loop integral we can thus associate an edge matrix $B=SA$ and use the framework of \cite{ohsugi1998normal}.

\begin{lemma}
    For an edge matrix $B$ of the form \eqref{eq: edge matrix B} the semigroup generated by $\overline{B}$ is normal.
\end{lemma}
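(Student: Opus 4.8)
The plan is to apply the Ohsugi--Hibi criterion, Theorem~\ref{thm: odd-cycle}: since $\overline{B}$ is by construction the edge matrix of the graph $\overline{H}$, it suffices to verify that $\overline{H}$ satisfies the odd cycle condition of Definition~\ref{def:oddCycle}. So the whole problem reduces to understanding the minimal odd cycles of $\overline{H}$ and checking that any two of them either share a vertex or are joined by an edge. Two structural features of $\overline{H}$ do all the work, and I would isolate them first. Reading off the first (identity) block of \eqref{eq: edge matrix B}, the vertex $0$ is adjacent to every vertex $1,\dots,d$, so $0$ is a \emph{universal} vertex of $H$, and hence of $\overline{H}$. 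Second, by the very definition of $\overline{H}$, the set $L\subseteq\{1,\dots,d\}$ of vertices carrying a loop (the columns of the right block) is completed to a clique: any two vertices of $L$ become adjacent in $\overline{H}$.

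Next I would classify the minimal odd cycles of $\overline{H}$. A loop is a minimal odd cycle of length one, based at a vertex of $L$. Because $0$ is universal, any chordless cycle through $0$ must be a triangle $\{0,i,j\}$ with $\{i,j\}$ an edge (a longer cycle through $0$ would have a chord from $0$); all remaining minimal odd cycles avoid $0$ and therefore live on $\{1,\dots,d\}$. The clean case is when the edges of the middle block together with the loop-completion leave no chordless odd cycle of length $\ge 5$ on $\{1,\dots,d\}$, so that every minimal odd cycle of $\overline{H}$ is a loop or a triangle. Granting this, the pairwise check is immediate: given minimal odd cycles $C,C'$, if they are vertex-disjoint choose $u\in C$ and $w\in C'$; since $0$ is universal and $L$ is a clique, $u$ and $w$ are joined by an edge of $\overline{H}$, so the odd cycle condition holds and $\NN\overline{B}$ is normal by Theorem~\ref{thm: odd-cycle}.

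The main obstacle is precisely the highlighted combinatorial input, namely ruling out long chordless odd cycles on $\{1,\dots,d\}$: a universal vertex alone does not suppress them, and a chordless odd cycle disjoint from a loop-vertex would already defeat the odd cycle condition. This is where the particular shape of \eqref{eq: edge matrix B} must be used rather than the mere presence of vertex $0$. Concretely, I would exploit that the loop-completion fills in all edges among the vertices of $L$, so that on $L$ (together with $0$) the graph $\overline{H}$ contains a complete graph and thus admits only triangles as chordless cycles; the task is then to show that no chordless odd cycle of length $\ge 5$ can survive on the non-loop vertices either. Once this structural fact about the middle block of \eqref{eq: edge matrix B} is established, the pairwise verification in the previous paragraph completes the proof.
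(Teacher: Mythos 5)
Your overall strategy---reduce to the odd cycle condition via Theorem~\ref{thm: odd-cycle} and analyze the minimal odd cycles of $\overline{H}$---is a reasonable direct route, but note that it is not the paper's: the paper proves this lemma by a one-line citation to Theorem~5.4 of Dlapa et al., i.e.\ it imports the normality classification for one-loop kinematic graphs rather than re-deriving it. Judged as a self-contained argument, your attempt has two genuine gaps. First, the pairwise check is not ``immediate'' even after restricting to loops and triangles. A minimal odd cycle need not contain the universal vertex $0$ and need not contain any loop-vertex, so for two vertex-disjoint triangles lying entirely in $\{1,\dots,d\}$, or for a triangle in $\{1,\dots,d\}$ paired with a loop at a vertex not adjacent to it, neither of your two structural features (universality of $0$, the clique on $L$) produces a bridging edge. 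The sentence ``since $0$ is universal and $L$ is a clique, $u$ and $w$ are joined by an edge of $\overline{H}$'' is a non sequitur unless one of $u,w$ equals $0$ or both lie in $L$. Second, you flag but never carry out the exclusion of chordless odd cycles of length $\ge 5$ on $\{1,\dots,d\}$, which you yourself identify as the main obstacle.

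These gaps cannot be closed using only the features you isolate, because for an arbitrary middle block in \eqref{eq: edge matrix B} the conclusion would actually fail. Take $d=6$ with middle-block edges forming the two triangles on $\{1,2,3\}$ and on $\{4,5,6\}$ and with no loops, so $\overline{H}=H$: the two triangles are vertex-disjoint and unbridged, the odd cycle condition fails, and concretely $\mathbf{e}_1+\cdots+\mathbf{e}_6=\tfrac12\sum_{e}\rho(e)$ (sum over the six triangle edges) lies in $\ZZ\overline{B}\cap\RR_{\ge 0}\overline{B}$ but not in $\NN\overline{B}$, since writing it as an $\NN$-combination would require a perfect matching of $\{1,\dots,6\}$ by triangle edges, which is impossible. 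So any correct proof must use more than ``identity block plus some pairs plus some doubles'': it must invoke which quadratic monomials can actually occur together in a one-loop $\cG$-polynomial, and that kinematic input is exactly what the cited Theorem~5.4 of Dlapa et al.\ supplies and what your argument would need to reconstruct.
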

\begin{proof}
    Follows directly from \cite[Theorem 5.4]{Dlapa:2023cvx}.
\end{proof}
\begin{corollary}\label{cor: conv equal normal conv}
    The polytopes $\conv(B)$ and $\conv(\overline{B})$ are equal.
\end{corollary}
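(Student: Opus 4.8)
The plan is to recall the general fact stated just before the corollary: for any graph $H$, one has $\conv(\overline{M}) = \conv(M)$, where $M$ and $\overline{M}$ are the edge matrices of $H$ and $\overline{H}$ respectively. Since $B$ is an edge matrix of some graph $H$ (as established in the paragraph preceding the lemma, $B$ has column sum two and non-negative integer entries), and $\overline{B}$ is by definition the edge matrix of $\overline{H}$, the corollary is an immediate specialization of this general statement to the case at hand.

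Concretely, I would argue as follows. The matrix $B$ is the edge matrix of a graph $H$ on vertex set $\{0,1,\ldots,d\}$: the identity block contributes loops or edges to the distinguished vertex, the middle block contributes the edges $\{i,j\}$ coming from the two-ones columns, and the right block contributes loops $t_i^2$ at the vertices indexed by the single-two columns. The graph $\overline{H}$ is obtained from $H$ by adding an edge $\{i,j\}$ whenever $H$ has loops at both $i$ and $j$ but no edge between them; its edge matrix is $\overline{B}$. Each newly added column $\rho(\{i,j\}) = \e_i + \e_j$ is the midpoint $\tfrac{1}{2}(\rho(\{i,i\}) + \rho(\{j,j\})) = \tfrac{1}{2}(2\e_i + 2\e_j)$, i.e. the average of the two loop-columns $2\e_i$ and $2\e_j$ already present in $B$. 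Hence every column of $\overline{B}$ not already in $B$ lies in $\conv(B)$, which gives $\conv(\overline{B}) \subseteq \conv(B)$; the reverse inclusion is trivial because $B$ is a submatrix of $\overline{B}$. Therefore $\conv(B) = \conv(\overline{B})$.

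I do not expect any genuine obstacle here, since the corollary is essentially a restatement of the remark that $\conv(\overline{M}) = \conv(M)$ holds for edge matrices in general. The only point requiring a line of justification is the midpoint observation above, which verifies that adjoining the edge $\{i,j\}$ between two looped vertices does not enlarge the convex hull; this is precisely the geometric content of why loops and their induced edges share the same polytope. One could alternatively invoke the preceding lemma, which guarantees that $\NN\overline{B}$ is normal, together with the remark that $k[\overline{H}]$ is a normalization of $k[H]$ that does not change the underlying polytope, but the direct convex-geometry argument is cleaner and self-contained and avoids relying on the normality input. I would present the proof in one or two sentences, most likely just deferring to the general remark preceding the statement.
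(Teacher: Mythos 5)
Your proposal is correct and uses essentially the same argument as the paper: the only columns of $\overline{B}$ not in $B$ are of the form $\mathbf{e}_i+\mathbf{e}_j$ for vertices $i,j$ both carrying loops, and each such column is the midpoint $\tfrac{1}{2}(2\mathbf{e}_i)+\tfrac{1}{2}(2\mathbf{e}_j)$ of two loop-columns already in $B$, hence lies in $\conv(B)$. The paper frames this via \cite[Theorem 5.4]{Dlapa:2023cvx}, whereas you read it off directly from the definition of $\overline{H}$, but the key midpoint computation is identical.
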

\begin{proof}
    From \cite[Theorem 5.4]{Dlapa:2023cvx} it follows that the only possible obstructions to normality of these one-loop semigroups come from vertices with loops not connected with an edge. Let $i$ and $j$ be two vertices with a loop and not connected by an edge of the graph with edge matrix $B$, i.e. $2\mathbf{e}_i,2\mathbf{e}_j\in B$ but $\mathbf{e}_i+\mathbf{e}_j\notin B$. Then $\overline{B}$ will contain a vector with a one at position $i$ and $j$ and zeros everywhere else, i.e. $\mathbf{e}_i+\mathbf{e}_j\in\overline{B}$. Since we have the linear combination
    \begin{equation*}
        \frac{1}{2}(2\mathbf{e}_i)+\frac{1}{2}(2\mathbf{e}_j)=\mathbf{e}_i+\mathbf{e}_j
    \end{equation*}
    with both multipliers on the left-hand side equal $1/2$ it follows that $\mathbf{e}_i+\mathbf{e}_j\in\mathrm{conv}(B)$.
\end{proof}
\begin{remark}
    Note that this is equivalent to saying that the facet-defining hyperplanes of $\NN B$ and $\NN\overline{B}$ are the equal. The result of this corollary is also mentioned in the text of~\cite{ohsugi1998normal}.
\end{remark}

\begin{proposition}\label{prop: (ii) always satisfied}
    Let $\NN B$ be the affine semigroup generated by an edge matrix of the form \eqref{eq: edge matrix B}. Then the criterion (ii)' in Theorem \ref{thm: Cohen-Macaulay Matusevich Yu} is always satisfied.
\end{proposition}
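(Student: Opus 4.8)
The plan is to transfer criterion (ii)' verbatim from the normal semigroup $\NN\overline{B}$, where it holds for free, to $\NN B$. The point is that every object entering (ii)' is intrinsic to the polyhedral cone and its face lattice, and these coincide for $B$ and $\overline{B}$.

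First I would upgrade the polytope equality of Corollary \ref{cor: conv equal normal conv} to an equality of cones. Since every column of $B$ and of $\overline{B}$ has coordinate sum two, each cone $\RR_{\ge 0}B$ and $\RR_{\ge 0}\overline{B}$ is the cone over the respective edge polytope; as $\conv(B)=\conv(\overline{B})$, the two cones are literally equal. In particular the facet-defining hyperplane arrangement $\cA=\{\scrH_1,\ldots,\scrH_m\}$, the set of regions $\frakr(\cA)$, and each cumulative region $\frakR_J=\bigcup_{i\in J}\scrH_i^+$ are the same data for $\NN B$ and $\NN\overline{B}$.

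The crux is to check that the simplicial complexes $\pi_J$ also agree. Here one must be slightly careful, because $\pi_J$ is phrased through the generators: $I\in\pi_J$ iff $\bigcap_{i\in I}F_i = B\cap\bigcap_{i\in I}\scrH_i$ is a non-empty face, and $\overline{B}$ has strictly more columns than $B$. The resolution is that $\bigcap_{i\in I}F_i$ is non-empty precisely when the cone-face $\Phi_I:=\RR_{\ge 0}B\cap\bigcap_{i\in I}\scrH_i$ is non-trivial, in which case $\Phi_I$ is spanned by an actual vertex of the common polytope $\conv(B)=\conv(\overline{B})$; such a vertex is an extreme point and hence occurs as a column of both $B$ and $\overline{B}$. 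Therefore a cone-face contains a generator of $\NN B$ if and only if it contains one of $\NN\overline{B}$, so the family of non-empty faces, and thus every $\pi_J$, is identical for the two semigroups.

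Finally I would invoke that (ii)' holds for $\NN\overline{B}$: by the preceding Lemma it is normal, hence Cohen-Macaulay by Hochster's Theorem \ref{thm: Hochster}, so Theorem \ref{thm: Cohen-Macaulay Matusevich Yu} yields both (i) and (ii)' for $\NN\overline{B}$. Since the previous two paragraphs show that the hypothesis $\frakR_J\in\frakr(\cA)$ and the conclusion ``$\pi_J$ is acyclic'' are the same statements for $\NN B$ as for $\NN\overline{B}$, criterion (ii)' holds for $\NN B$ as well. I expect the genuine obstacle to be exactly the third paragraph --- pinning down that passing to the larger generating set $\overline{B}$ neither creates nor destroys any non-empty face --- while the remaining steps are formal once the cones are identified.
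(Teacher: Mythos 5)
Your proposal is correct and follows essentially the same route as the paper: deduce (ii)' for the normalization $\NN\overline{B}$ from Hochster's theorem plus the ``if and only if'' in Theorem \ref{thm: Cohen-Macaulay Matusevich Yu}, then transfer it to $\NN B$ because the facet data coincide by Corollary \ref{cor: conv equal normal conv}. Your third paragraph carefully justifies that the complexes $\pi_J$ themselves agree (via extreme rays being common columns), a point the paper compresses into the assertion that condition (ii)' ``only depends on $\cA$''.
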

\begin{proof}
    As Theorem \ref{thm: Cohen-Macaulay Matusevich Yu} is a perfect classification of Cohen-Macaulayness, and normality implies Cohen-Macaulayness by the theorem of Hochster (Theorem \ref{thm: Hochster}), condition (ii)' of Theorem \ref{thm: Cohen-Macaulay Matusevich Yu} must especially hold for normal semigroups.

    By Corollary \ref{cor: conv equal normal conv} and the following remark, the collection $\cA$ of facet defining hyperplanes of $\NN B$ equals that of the normalization $\NN \overline{B}$. The condition (ii)' only depends on $\cA$ and as it has to hold for $\NN\overline{B}$ it must also hold for $\NN B$.
\end{proof}
This means that for semigroups $Q$ coming from one-loop Feynman graphs, the ring $k[Q]$ is Cohen-Macaulay if and only if $\widetilde{Q}=Q$. The description of $\widetilde{Q}$ uses the facets of $Q$, as we will see in the next section these have a combinatorial description.
\subsection{Facet description}
In \cite{ohsugi1998normal} the authors not only provide necessary and sufficient conditions for normality but also give the facet description of the edge polytope. We begin with some definitions.

\begin{definition}[Regular vertex]
A vertex $i\in\{0,\ldots,d\}$ of a graph $H$ is said to be \textit{regular} in $H$ if every connected component of $H_{\{0,\cdots,n\}\setminus{i}}$ has at least one odd cycle.
\end{definition}
Let $N(H;i)$ denote the set of vertices in $H$ connected to $i$ by either an edge or a loop.
\begin{definition}[Independent subset]
 A non-empty subset $T\subset\{0,\dots,d\}$ is \textit{independent} in $H$ if $N(H;i)$ satisfies $N(H;i)\cap T=\emptyset$ for every $i\in T$.      
\end{definition}
This means that no edge $\{i,j\}$ with $i,j\in T$ belongs to $H$ no loop at $i\in T$ belongs to $H$.
\begin{definition}[Bipartite graph]
    A graph $H$ is called \textit{bipartite} if its vertex set $V(H)$ has a partition $V(H)=V_1\sqcup V_2$ with $V_1,V_2\neq \emptyset$ and $V_1\cap V_2=\emptyset$ such that for each edge $\{i,j\}$ in $H$ has $i\in V_1$ and $j\in V_2$. In other words a graph $H$ is \textit{bipartite} if and only if it has no odd cycle.
\end{definition}
Let $N(H;T)$ denote the union $\cup_{i\in T}N(H;i)$, then the bipartite graph \textit{induced} by $T$ in $H$ is defined as the bipartite graph with vertex set $T\sqcup N(H;T)$ and edges $\{i,j\}$ in $H$ such that $i\in T$ and $j\in N(H;T)$.
\begin{definition}[Fundamental set]
    A non-empty subset $T\subset\{0,\ldots,d\}$ is \textit{fundamental} in $H$ if
    \begin{itemize}
        \item[(a)] $T$ is independent in $H$ and the bipartite graph induced by $T$ in $H$ is connected, and
        \item[(b)] either $T\sqcup N(H;T)=\{0,\ldots, d\}$ or every connected component of the subgraph $H_{\{0,\ldots,d\}\setminus(T \sqcup N(H;T)}$ contains an odd cycle.
    \end{itemize}
\end{definition}
To describe the facets of the edge polytope we need to set of hyperplanes:
\begin{equation}
\begin{split}
    \mathscr{H}_i&:=\{(x_0,\dots,x_{d})\in \mathbb{R}^{d+1}\, |\, x_i=0\}, \\
    \mathscr{H}_T&:=\{(x_0,\dots,x_{d})\in \mathbb{R}^{d+1}\, |\, \sum_{i\in T}x_i=\sum_{j\in N(H;T)}x_j\}.
\end{split}
\end{equation}
With these hyperplanes we have the following facet description.
\begin{theorem}[{\cite[Theorem 1.7]{ohsugi1998normal}}]\label{thm: facets edge polytope}
    Let $H$ be a finite connected graph on the vertex set $\{0,\dots,d\}$ allowing loops and having no multiple edges with $H$ having at least one odd cycle. The set of facets of the edge polytope of $H$ are all the $\mathscr{H}_i$ for which $i$ is regular in $H$ and all the $\mathscr{H}_T$ for which $T$ is fundamental in $H$.
\end{theorem}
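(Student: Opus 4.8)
The plan is to work inside the affine hyperplane $\{x : \sum_{i=0}^{d} x_i = 2\}$, which contains every point $\rho(e)$, and to use the standard fact that the edge polytope of a \emph{connected} graph on $k$ vertices has dimension $k-1$ when the graph has an odd cycle and $k-2$ when it is bipartite. Since $H$ is connected and has an odd cycle, $\conv(M)$ has dimension $d$, so its facets are exactly its $(d-1)$-dimensional faces. The one combinatorial input I would isolate first is a dimension lemma: if a graph $G$ on a vertex set $W$ has connected components $G_1,\dots,G_c$ carried by disjoint coordinate blocks, then $\conv(M_G)$ is the convex hull of the $\conv(M_{G_k})$, which lie in complementary coordinate subspaces and each meet $\{\sum x = 2\}$, so $\dim \conv(M_G) = \sum_k \dim\conv(M_{G_k}) + (c-1)$. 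In particular $\dim\conv(M_G) = |W|-1$ when every component has an odd cycle, while a single bipartite component lowers this by one.

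Next I would check validity of the two families. For each $i$ the inequality $x_i \ge 0$ holds on $\conv(M)$ because every $\rho(e)$ has non-negative entries, so $\mathscr{H}_i$ supports $\conv(M)$. For an independent $T$ there is no loop at a vertex of $T$ and no edge inside $T$, so evaluating $\sum_{i\in T} x_i - \sum_{j\in N(H;T)} x_j$ on a generator $\rho(e)$ always gives a value $\le 0$, and the value is $0$ precisely when $e$ joins $T$ to $N(H;T)$ or when $e$ lies entirely in $W := \{0,\dots,d\}\setminus(T\sqcup N(H;T))$. Hence $\mathscr{H}_T$ supports $\conv(M)$, and the face it cuts out is the convex hull of the edge polytope of the bipartite graph induced by $T$ on $T\sqcup N(H;T)$ together with the edge polytope of $H_W$.

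The facet criteria then follow from the dimension lemma by a direct count. The face $\mathscr{H}_i \cap \conv(M)$ is the edge polytope of $H$ with vertex $i$ removed, a graph on $d$ vertices; this has dimension $d-1$ exactly when every component carries an odd cycle, i.e.\ when $i$ is regular, and smaller dimension otherwise. The face $\mathscr{H}_T \cap \conv(M)$ is the join of a bipartite edge polytope on the $|T\sqcup N(H;T)|$ vertices of $T\sqcup N(H;T)$, whose dimension is $|T\sqcup N(H;T)|-2$ exactly when the induced bipartite graph is connected, and of the edge polytope of $H_W$, whose dimension is $|W|-1$ exactly when every component of $H_W$ has an odd cycle (the $W$-part being absent when $W=\varnothing$). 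Adding these and the $+1$ coming from the join gives total dimension $d-1$ precisely under conditions (a) and (b) defining a fundamental set, and a strictly smaller dimension if either fails. This proves that each regular $\mathscr{H}_i$ and each fundamental $\mathscr{H}_T$ is a facet, and that the non-regular and non-fundamental hyperplanes are not.

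The genuinely hard step is completeness: that there are no other facets. Here I would argue that the half-spaces bounded by the regular $\mathscr{H}_i$ and the fundamental $\mathscr{H}_T$, intersected with $\{\sum x = 2\}$, already cut out $\conv(M)$. All of them are valid, so $\conv(M)$ lies in the polyhedron $P'$ they define; it then suffices to show every vertex of $P'$ is a generator $\rho(e)$. Dually, given an arbitrary facet of $\conv(M)$ with outer normal $c$, I would normalize $c$ by subtracting a multiple of $(1,\dots,1)$ — which is constant on $\conv(M)$ — and rescaling, and then read off the set of edges lying on the facet. The crux is a combinatorial argument showing that a $(d-1)$-dimensional facet forces this edge set into one of exactly two shapes: concentrated so that $c$ is supported on a single regular coordinate, yielding some $\mathscr{H}_i$; or splitting the vertex set into a set $T$, its neighborhood, and a complementary part carrying odd cycles, yielding some fundamental $\mathscr{H}_T$. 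Ruling out every ``mixed'' normal that is of neither type — using connectedness of $H$ and the presence of an odd cycle to exclude the degenerate possibilities — is where the real work concentrates, and the regular/fundamental hypotheses are precisely what make the two families simultaneously facet-defining and sufficient.
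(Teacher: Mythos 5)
The paper does not actually prove this statement: it is imported verbatim as Theorem 1.7 of Ohsugi--Hibi \cite{ohsugi1998normal}, so there is no internal proof to compare against. Judged on its own terms, your proposal correctly handles one half of the theorem. The validity computations are right: $x_i\ge 0$ holds on every generator, and for an independent $T$ the functional $\sum_{i\in T}x_i-\sum_{j\in N(H;T)}x_j$ evaluates to $0$ on edges joining $T$ to $N(H;T)$ and on edges (or loops) inside $W=\{0,\dots,d\}\setminus(T\sqcup N(H;T))$, and to a negative value otherwise. Your dimension bookkeeping (join of polytopes supported on complementary coordinate blocks, $k-1$ for a connected non-bipartite component versus $k-2$ for a bipartite one) then correctly shows that $\mathscr{H}_i\cap\conv(M)$ has dimension $d-1$ exactly when $i$ is regular, and $\mathscr{H}_T\cap\conv(M)$ has dimension $d-1$ exactly when $T$ is fundamental. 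Two small points you should still pin down: the ``standard'' dimension fact must be stated for graphs with loops (a loop plays the role of an odd cycle of length one, which is what makes the regularity condition match the dimension count when a component consists of a looped vertex), and the join formula needs the convention that an empty component polytope (an isolated, loopless vertex) kills a full dimension rather than contributing $-1+1$.

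The genuine gap is completeness, and you have named it rather than closed it. Saying that a facet normal $c$, after normalizing modulo $(1,\dots,1)$, must be ``concentrated on a single regular coordinate'' or must ``split the vertex set into $T$, its neighborhood, and a complementary part carrying odd cycles'' is a restatement of the theorem, not an argument. The actual work is to take an arbitrary supporting functional $a$ defining a $(d-1)$-dimensional face, set (after normalization) $T=\{i: a_i<0\}$ or $\{i:a_i>0\}$ as appropriate, and use connectedness of $H$, the presence of an odd cycle, and the hypothesis that the edges annihilated by $a$ span a $(d-1)$-dimensional set to force $a$ to take exactly two values with the combinatorics of an independent/fundamental set --- including ruling out normals with three or more distinct values and normals whose zero set is a non-fundamental independent set. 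None of these eliminations appear in the proposal, so as written it proves only that the listed hyperplanes are facets, not that they are all of them. Until that dual analysis (or an equivalent primal argument that the listed inequalities cut out $\conv(M)$, e.g.\ by showing every vertex of the candidate polyhedron is some $\rho(e)$) is supplied, the proof is incomplete.
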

Using this theorem we get a very simple description of the facets of $\NN A$. When all internal masses are non-zero, every vertex of the kinematic graph is a regular vertex and the only fundamental set is $T=\{0\}$. For a one-loop Feynman graph with $d$ edges we get the $d+2$ facets:
\begin{equation}\label{eq: facets all massive}
\begin{split}
    \mathscr{H}_i&=\{(x_0,\dots,x_{d})\in \mathbb{R}^{d+1}\, |\, x_i=0\},\ i=0,1,\ldots, d,\\
    \mathscr{H}_{\{0\}}&=\{(x_0,\dots,x_{d})\in \mathbb{R}^{d+1}\, |\, x_0=\sum_{j=1}^dx_j\}.
\end{split}
\end{equation}
\section{Massive one-loop Feynman integrals}\label{sec: massive one loop}
In this section, we prove the main theorem of the paper, Theorem \ref{massivetheorem}. This theorem gives a necessary and sufficient condition for the semigroup ring of a one-loop integral with massive internal edges to be Cohen-Macaulay. Together with this theorem, we provide a simple sufficient condition for Cohen-Macaulayness, Theorem \ref{thm: massive sufficient not CM}, and also prove that a special subset of massive integrals are Cohen-Macaulay, Theorem \ref{thm: CM only loops}. The latter is of independent interest as the proof strategy is different from Theorem \ref{thm: massive sufficient not CM} and \ref{massivetheorem}.

Before proving the theorems, we start with the simplest example of a massive one-loop integral (note that the massive tadpole has an empty toric variety), the massive one-loop bubble. With massive edges there are only two different $A$-matrices for the bubble: 
\begin{equation}
    A_1=\begin{pmatrix}
        1&1&1&1&1\\
        1&0&1&2&0\\
        0&1&1&0&2
    \end{pmatrix},\quad A_2=\begin{pmatrix}
        1&1&1&1\\
        1&0&2&0\\
        0&1&0&2
    \end{pmatrix},
\end{equation}
where the first is for $m_1^2+m_2^2-p^2\neq 0$ and the latter for $m_1^2+m_2^2-p^2= 0$. The kinematic graphs for these cases are shown in Figure \ref{fig: generic massive bubble} and \ref{fig: special massive bubble} where it is easy to see that $A_1$ satisfies the odd cycle condition (Definition \ref{def:oddCycle}) so $\NN A_1$ is normal by Theorem \ref{thm: odd-cycle}. By the same theorem it follows that $\NN A_2$ is not normal as there is no edge connecting the two loops in Figure~\ref{fig: special massive bubble}. 
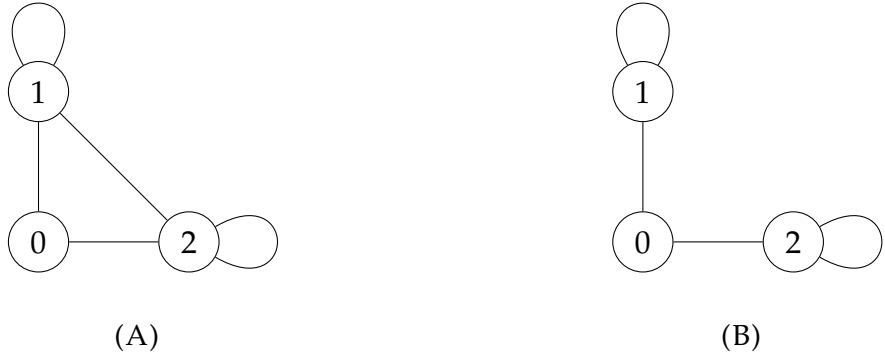
\begin{figure}[t]
    \begin{subfigure}[t]{0.45\textwidth}
    \centering
        \begin{tikzpicture}
    \node[circle, draw, minimum size=0.8cm] (0) at (0,0) {0};
    \node[circle, draw, minimum size=0.8cm] (1) at ({2*cos(90)}, {2*sin(90)}) {1};
    \node[circle, draw, minimum size=0.8cm] (2) at ({2*cos(0)}, {2*sin(0)}) {2};

    \draw (0) -- (1);
    \draw (0) -- (2);
    \draw (1) -- (2);

    \draw (1) to[out=60, in=120, looseness=8] (1);
    \draw (2) to[out=30, in=330, looseness=8] (2);

\end{tikzpicture}
\caption{}\label{fig: generic massive bubble}
    \end{subfigure}\hspace*{0.5cm}
    \begin{subfigure}[t]{0.45\textwidth}
    \centering
        \begin{tikzpicture}
    \node[circle, draw, minimum size=0.8cm] (0) at (0,0) {0};
    \node[circle, draw, minimum size=0.8cm] (1) at ({2*cos(90)}, {2*sin(90)}) {1};
    \node[circle, draw, minimum size=0.8cm] (2) at ({2*cos(0)}, {2*sin(0)}) {2};

    \draw (0) -- (1);
    \draw (0) -- (2); 

    \draw (1) to[out=60, in=120, looseness=8] (1);
    \draw (2) to[out=30, in=330, looseness=8] (2);
\end{tikzpicture}
   \caption{}\label{fig: special massive bubble}\end{subfigure}\caption{(A) Kinematic graph for massive bubble with $m_1^2+m_2^2-p^2\neq 0$ which satisfies the odd cycle condition. (B) Kinematic graph for massive bubble with $m_1^2+m_2^2-p^2=0$. This graph does not satisfy the odd cycle condition}
\end{figure}

However, $k[\NN A_1]$ is still Cohen-Macaulay even if it is not normal. We will now prove that this is not a special case for the bubble, but in fact, every massive one-loop integral with $m_i^2+m_j^2-p(F_{ij})^2=0$ for all $i$ and $j$ are Cohen-Macaulay.

Let $A$ be the $(d+1)\times 2d$ matrix 
\begin{equation}\label{eq: A matrix only loops}
    A=\left(
    \begin{array}{c|c}
        1\ \cdots\ 1  & 1\ \ \ \cdots\ \ 1 \\
        \mathbb{1}_{d\times d} & 2\cdot\mathbb{1}_{d\times d}
    \end{array}
    \right)
\end{equation}
which correspond to the kinematic graph is Figure \ref{fig:bipartite graph}. In order to prove that $k[\NN A]$ is Cohen-Macaulay we show that the Gröbner basis of the toric ideal $I_A$ with lexicographic ordering is easy to write down. The initial ideal in this ordering is homeomorphic to a monomial ideal that is the edge ideal of an bipartite graph satisfying a Cohen-Macaulay condition by Herzog and Hibi \cite[Corollary 9.1.14]{Herzog2011book}, originally proved in \cite[Theorem~3.4]{Herzog2005}. 

Let $R=k[x_1,\ldots,x_d,y_1,\ldots,y_d]$ be a polynomial ring with lexicographic ordering $x_1>\cdots>x_d>y_1>\cdots>y_d$ corresponding to the columns of $A$. The following set of binomials form a Gröbaner basis of $I_A$
\begin{equation}
    \{x_i^2y_j-x_j^2y_i\,|\,i,j\in\{1,\ldots,d\},\ i<j\}\subseteq I_A.
\end{equation}
The matrix $SA$ is essentially a \emph{Lawrence lifting} of the row-matrix $(1,\ldots,1)$ of $d$ ones. To be precise we have that $\mathbf{u}\in\mathrm{ker}((1,\ldots,1))\iff (2\mathbf{u}|-\mathbf{u})\in\mathrm{ker}(A)$ so we have
\begin{equation}
    I_A=\avg{x^{2\mathbf{u}_+}y^{\mathbf{u}_-}-x^{2\mathbf{u}_-}y^{\mathbf{u}_+}\,|\,\mathbf{u}\in\mathrm{ker}((1,\ldots,1))}.
\end{equation}
It now follows from \cite[Theorem 7.1]{Sturmfels1996} that the full set of $\{x_i^2y_j-x_j^2y_i\}$ is not just a Gröbner basis but also a universal and Graver basis. For this special $A$, it is also the set of circuits.

The initial ideal with lexicographic ordering is thus
\begin{equation}\label{eq: initial ideal A only loops}
    \mathrm{in}_\prec(I_A)=\avg{ x_i^2y_j\ |\ i,j\in\{1,\ldots,d\},\ i<j}.
\end{equation}
The initial ideal is important as a homogeneous ideal $I_A$ is Cohen-Macaulay if any of its initial ideals $\mathrm{in}_\prec(I_A)$ are Cohen-Macaulay. Let $J=\sqrt{\mathrm{in}_\prec(I_A)}$ denote the radical of this monomial ideal. If $J$ is Cohen-Macaulay, then so is $\mathrm{in}_\prec(I_A)$ since the map $\varphi:R\to R$ given by the substitution $\varphi(x_i)=x_i^2$ for $i=1,\ldots,d-1$ gives $\varphi(J)R=\mathrm{in}_\prec(I_A)$ and is a flat $k$-algebra homomorphism. All we now have to show is that $J$ is Cohen-Macaulay. To do this we use the following classification by Herzog and Hibi:

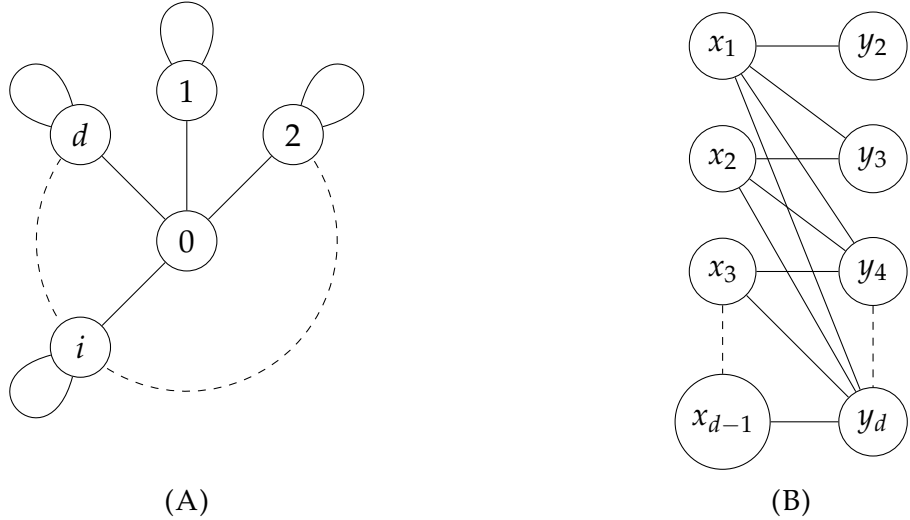
\begin{figure}
    \centering
        \begin{subfigure}[t]{0.45\textwidth}
    \centering
        \begin{tikzpicture}
    \node[circle, draw, minimum size=0.8cm] (0) at (0,0) {0};
    \node[circle, draw, minimum size=0.8cm] (1) at ({2*cos(90)}, {2*sin(90)}) {1};
    \node[circle, draw, minimum size=0.8cm] (2) at ({2*cos(45)}, {2*sin(45)}) {2};
    \node[circle, draw, minimum size=0.8cm] (i) at ({2*cos(225)}, {2*sin(225)}) {\(i\)};
    \node[circle, draw, minimum size=0.8cm] (d) at ({2*cos(135)}, {2*sin(135)}) {\(d\)};

    \draw (0) -- (1);
    \draw (0) -- (2);
    \draw (0) -- (i);
    \draw (0) -- (d);
    \draw[dashed] (238:2) arc (238:393:20mm);
    \draw[dashed] (210:2) arc (210:147:20mm);
    
    \draw (1) to[out=60, in=120, looseness=8] (1);
    \draw (2) to[out=15, in=75, looseness=8] (2);
    \draw (i) to[out=195, in=255, looseness=8] (i);
    \draw (d) to[out=105, in=165, looseness=8] (d);

\end{tikzpicture}
\caption{}\label{fig: only loops}
    \end{subfigure}\hspace*{0.5cm}
        \begin{subfigure}[t]{0.45\textwidth}
    \centering
        \begin{tikzpicture}
    \node[circle, draw, minimum size=0.8cm] (0) at (-1,2.5) {\(x_1\)};
    \node[circle, draw, minimum size=0.8cm] (1) at (1,2.5) {\(y_2\)};
    \node[circle, draw, minimum size=0.8cm] (2) at (-1, 1) {\(x_2\)};
    \node[circle, draw, minimum size=0.8cm] (3) at (1, 1) {\(y_3\)};
    \node[circle, draw, minimum size=0.8cm] (4) at (-1, -0.5) {\(x_3\)};
    \node[circle, draw, minimum size=0.8cm] (5) at (1, -0.5) {\(y_4\)};
        \node[circle,draw] (6) at (-1, -2.5) {\(x_{d-1}\)};
    \node[circle, draw, minimum size=0.8cm] (7) at (1, -2.5) {\(y_d\)};

\draw (0) -- (1);
\draw (0) -- (3);
\draw (0) -- (5);
\draw (0) -- (7);
\draw (2) -- (3);
\draw (2) -- (5);
\draw (2) -- (7);
   \draw (4) -- (5);
\draw (4) -- (7);
\draw (6) -- (7);
\draw[dashed] (4)--(6);
\draw[dashed] (5)--(7);

\end{tikzpicture}
\caption{} \label{fig:bipartite graph}
\end{subfigure}
    \caption{(A) The kinematic graph for \eqref{eq: A matrix only loops}, this correspond to $m_i^2+m_j^2-p(F_{ij})^2=0$ for all $i$ and~$j$. (B) Bipartite graph for the monomial ideal $\sqrt{\mathrm{in}_\prec(I_A)}$.}
\end{figure}
\begin{theorem}[{\cite[Corollary 9.1.14]{Herzog2011book}}]\label{thm: CM Herzog Hibi}
    Let $G$ be a bipartite graph on the vertex set $V\bigsqcup V',$ where $V=\{x_1,\ldots,x_d\}$ and $V'=\{y_1,\ldots,y_d\}$, and suppose that $G$ satisfies the conditions:
    \begin{itemize}
        \item[(i)] $\{x_i,y_i\}$ is an edge of $G$ for all $1\le i\le d$,
        \item[(ii)] if $\{x_i,y_j\}$ is an edge of $G$, then $i\le j$.
    \end{itemize}
    Then $G$ is Cohen-Macaulay if and only if the following condition $(iii)$ is satisfied:
    \begin{itemize}
        \item[(iii)] If $\{x_i,y_j\}$ and $\{x_j,y_k\}$ are edges of $G$ with $i<j<k$, then $\{x_i,y_k\}$ is an edge of $G$.
    \end{itemize}
\end{theorem}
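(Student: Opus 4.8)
The plan is to recognise the edge ideal $I(G)$ as the Stanley--Reisner ideal of the independence complex $\Delta:=\operatorname{Ind}(G)$, so that $R/I(G)=k[\Delta]$ and ``$G$ Cohen--Macaulay'' means exactly ``$\Delta$ Cohen--Macaulay''. First I would encode the hypotheses as a binary relation on $[d]:=\{1,\dots,d\}$, setting $i\preceq j$ precisely when $\{x_i,y_j\}$ is an edge of $G$. Condition (i) makes $\preceq$ reflexive, condition (ii) (which forces $i\le j$ whenever $i\preceq j$) makes it antisymmetric, and condition (iii) is literally transitivity. Thus (iii) holds iff $\preceq$ is a partial order, and the theorem becomes: the bipartite graph $G_P$ attached to $P$ is Cohen--Macaulay iff $P$ is a genuine poset. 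Throughout I would use the combinatorial description of maximal independent sets $W=\{x_a:a\in A\}\sqcup\{y_b:b\in B\}$: such a $W$ is independent iff no $a\in A$ has $a\preceq b$ for some $b\in B$, and maximal iff moreover every $c\notin A$ has some $b\in B$ with $c\preceq b$ and every $c\notin B$ has some $a\in A$ with $a\preceq c$.

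For the direction CM $\Rightarrow$ (iii) I would argue by contraposition through unmixedness, using that a Cohen--Macaulay ring is equidimensional, so $\Delta$ is pure (all maximal independent sets have the same cardinality). The set $\{x_1,\dots,x_d\}$ is always a maximal independent set of size $d$, since each $y_c$ is blocked by the edge $\{x_c,y_c\}$ from (i). Now suppose (iii) fails at some $i<j<k$, i.e. $\{x_i,y_j\}$ and $\{x_j,y_k\}$ are edges but $\{x_i,y_k\}$ is not. Then $\{x_i,y_k\}$ is independent, so I extend it to a maximal independent set $W$. Because $x_i\in W$ and $x_i$ is adjacent to $y_j$, we get $y_j\notin W$; because $y_k\in W$ and $y_k$ is adjacent to $x_j$, we get $x_j\notin W$. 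Hence $W$ omits the index $j$ entirely, and since $A,B$ are forced disjoint by reflexivity, $|W|\le d-1$. Two maximal independent sets of different sizes contradict purity, so (iii) is necessary.

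For the converse (iii) $\Rightarrow$ CM, purity is now automatic: when $\preceq$ is transitive, the maximality conditions force $A\sqcup B=[d]$ (were $c$ outside $A\cup B$, one would get $a\preceq c\preceq b$ with $a\in A$, $b\in B$, hence $a\preceq b$, violating independence), so every facet of $\Delta$ has exactly $d$ vertices. To upgrade purity to Cohen--Macaulayness I would prove $\Delta$ is vertex decomposable by induction on $|P|$. A maximal element $m\in P$ yields a leaf $x_m$ of $G_P$ (its only neighbour is $y_m$), which serves as a shedding vertex: the link $\operatorname{lk}_\Delta(x_m)=\operatorname{Ind}(G_P\setminus N[x_m])$ is the independence complex of the poset graph on $[d]\setminus\{m\}$, and the deletion $\operatorname{del}_\Delta(x_m)=\operatorname{Ind}(G_P\setminus x_m)$ is handled within the same class, so the recursion closes and yields shellability, hence Cohen--Macaulayness over every field. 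An equivalent route is Eagon--Reiner: show that the Alexander-dual cover ideal $\bigcap(u,v)$, intersected over the edges $\{u,v\}$ of $G$, acquires linear quotients precisely when $P$ is transitive, giving it a linear resolution.

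The main obstacle is the sufficiency direction. Necessity needs only the soft implication ``CM $\Rightarrow$ pure'' together with the explicit impure configuration read off from a transitivity defect. The real content is proving that $G_P$ is Cohen--Macaulay and not merely pure. On the link side the induction is clean, since deleting a maximal element leaves a poset graph still satisfying (i) and (ii); the delicate point is the deletion side, where removing $x_m$ while keeping $y_m$ breaks the perfect matching of condition (i), so the inductive class must be enlarged to bipartite graphs attached to posets with some unmatched vertices, and the shedding condition must be verified at each step. Equivalently, in the Eagon--Reiner approach one must exhibit the explicit linear-quotient order on the cover ideal and check that every colon is generated by variables exactly under transitivity. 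This bookkeeping is where the essential work lies.
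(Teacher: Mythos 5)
First, a point of order: the paper does not prove this statement. It is imported verbatim from Herzog--Hibi \cite[Corollary 9.1.14]{Herzog2011book} (originally \cite{Herzog2005}), so there is no in-paper proof to compare against. Judged on its own merits, your translation to a relation $\preceq$ on $[d]$ and your necessity argument are correct and complete: CM implies unmixedness, $\{x_1,\dots,x_d\}$ is always a facet of size $d$ by (i) and bipartiteness, and a transitivity defect at $i<j<k$ yields a maximal independent set containing $x_i$ and $y_k$ that must avoid both $x_j$ and $y_j$, hence has size at most $d-1$. That is the standard argument for the ``only if'' half.

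The sufficiency direction, however, contains both a genuine gap and a concrete error. You derive purity from transitivity correctly, but purity is far weaker than Cohen--Macaulayness, and you explicitly defer ``the essential work'' of the induction. Moreover, the shedding vertex you name is the wrong one: for a maximal element $m$ of $P$ the leaf $x_m$ does \emph{not} satisfy the shedding condition in general. Already for $d=2$ with $1\prec 2$ (edges $\{x_1,y_1\}$, $\{x_2,y_2\}$, $\{x_1,y_2\}$), the deletion $\mathrm{del}_\Delta(x_2)$ has the facet $\{x_1\}$, which is not a facet of $\Delta$ because $\{x_1,x_2\}$ is independent in $G$. The correct shedding vertex is the leaf's neighbour $y_m$: in $G\setminus y_m$ the vertex $x_m$ becomes isolated, so every facet of $\mathrm{del}_\Delta(y_m)$ contains $x_m$ and therefore remains maximal in $G$. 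Even with that fix, the link $\mathrm{Ind}(G\setminus N[y_m])$ deletes several $x_i$ while retaining their partners $y_i$, so it leaves the class of graphs satisfying (i), and the induction hypothesis must be set up for a strictly larger class --- precisely the bookkeeping you acknowledge but do not carry out. The cited source closes this gap by a different route, showing via Alexander duality that the cover ideal $I(G)^\vee$ has linear quotients and invoking Eagon--Reiner. As it stands, your proposal establishes only the necessity of condition (iii).
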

Using this theorem it follows directly that massive one-loop integrals with $m_i^2+m_j^2-p(F_{ij})^2=0$ for all $i$ and $j$ are Cohen-Macaulay:
\begin{theorem}\label{thm: CM only loops}
    The semigroup ring $k[\NN A]$ with $A$ as in \eqref{eq: A matrix only loops} is Cohen-Macaulay.
\end{theorem}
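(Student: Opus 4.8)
The plan is to exploit the chain of reductions already assembled above, so that the only remaining work is a purely combinatorial verification. By the facts recorded before the statement, $k[\NN A]=R/I_A$ is Cohen--Macaulay as soon as the initial ideal $\mathrm{in}_\prec(I_A)=\langle x_i^2 y_j\mid i<j\rangle$ is, and the flat substitution $\varphi(x_i)=x_i^2$ reduces this in turn to Cohen--Macaulayness of the radical $J=\sqrt{\mathrm{in}_\prec(I_A)}$. So first I would compute $J$ explicitly: taking the radical merely drops the exponents, giving $J=\langle x_i y_j\mid 1\le i<j\le d\rangle$, which is exactly the edge ideal of the bipartite graph $G$ drawn in Figure~\ref{fig:bipartite graph}. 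Everything then comes down to checking that $G$ meets the hypotheses of Herzog--Hibi's Theorem~\ref{thm: CM Herzog Hibi}.

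The key step is to reindex $G$ so that it is presented in Herzog--Hibi normal form. Note that $x_d$ and $y_1$ never occur among the generators of $J$ (the constraint $i<j$ forces $i\le d-1$ and $j\ge 2$), so $G$ lives on $\{x_1,\dots,x_{d-1}\}\sqcup\{y_2,\dots,y_d\}$. I would therefore relabel the right-hand vertices by the shift $\tilde y_k:=y_{k+1}$ for $k=1,\dots,d-1$, under which the edge $\{x_i,y_j\}$ with $i<j$ becomes $\{x_i,\tilde y_k\}$ with $k=j-1\ge i$. Thus, on the vertex set $\{x_1,\dots,x_{d-1}\}\sqcup\{\tilde y_1,\dots,\tilde y_{d-1}\}$, the edges of $G$ are precisely the pairs $\{x_i,\tilde y_k\}$ with $i\le k$; that is, $G$ is the ``staircase'' bipartite graph.

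In this presentation the three conditions are immediate: condition~(i) holds because $\{x_i,\tilde y_i\}$ satisfies $i\le i$; condition~(ii) holds by the very definition of the edge set; and condition~(iii) is automatic, since if $\{x_i,\tilde y_k\}$ and $\{x_k,\tilde y_l\}$ are edges with $i<k<l$ then $i<l$ gives $i\le l$, so $\{x_i,\tilde y_l\}$ is again an edge. Hence Theorem~\ref{thm: CM Herzog Hibi} shows $J$ is Cohen--Macaulay, and unwinding the two reductions yields Cohen--Macaulayness of $k[\NN A]$. I expect the only real friction to be bookkeeping rather than mathematics: one must double-check that the radical genuinely collapses to the squarefree edge ideal, so that the diagonal edges $\{x_i,\tilde y_i\}$ demanded by condition~(i) materialize \emph{only} after the shift $y_j\mapsto\tilde y_{j-1}$ (before the shift there are none, as $i<j$ is strict), and that the two degeneration steps---flatness of $\varphi$ and the preservation of Cohen--Macaulayness under passage to an initial ideal---apply as quoted. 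The Herzog--Hibi verification itself is tautological for the staircase graph.
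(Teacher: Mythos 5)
Your proposal is correct and follows essentially the same route as the paper's proof: pass to the lexicographic initial ideal, take its radical to obtain the edge ideal of the staircase bipartite graph of Figure~\ref{fig:bipartite graph}, shift the $y$-indices, and verify conditions (i)--(iii) of Theorem~\ref{thm: CM Herzog Hibi}. Your write-up is in fact slightly more careful than the paper's, spelling out the radical computation and the index shift explicitly.
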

\begin{proof}
    We have constructed an initial ideal for the toric ideal in \eqref{eq: initial ideal A only loops}. Its radical correspond to the edge ideal of the graph depicted in Figure \ref{fig:bipartite graph}. Relabelling $y_i\to y_{i-1}$ in this graph we see that this graph satisfies condition (i), (ii), and (iii) in Theorem \ref{thm: CM Herzog Hibi} and is therefore Cohen-Macaulay.
\end{proof}

For Feynman integrals with three or more internal edges, we have cases when the semigroup ring is not Cohen-Macaulay. The simplest massive example is the triangle with $A$-matrix
\begin{equation}\label{eq: AmatrixTrianglenotCM}
    A=\begin{pmatrix}
        1&1&1&1&1&1&1&1\\
        1&0&0&1&0&2&0&0\\
        0&1&0&0&1&0&2&0\\
        0&0&1&1&1&0&0&2
    \end{pmatrix}.
\end{equation}
The kinematic graph for this is shown in Figure \ref{fig: nonCM three mass} and the Newton polytope of the Lee-Pomeransky polynomial in Figure \ref{fig: nonCMpolytope3mass}. This kinematic graph does not satisfy the odd cycle condition so the affine semigroup $\NN A$ is certainly not normal (see also \cite[Theorem 5.4]{Dlapa:2023cvx}). The length of the minimal free resolution is 5, which is larger than the Auslander-Buchsbaum bound of 4, so $k[\NN A]$ is not Cohen-Macaulay. By Proposition \ref{prop: (ii) always satisfied} condition (ii)' is always satisfied in Theorem \ref{thm: Cohen-Macaulay Matusevich Yu} so Cohen-Macaulayness can also be tested by checking if $\bigcap_{i=1}^5(\NN A-\NN F_i)=\NN A$ where $F_i$ are the facets of $A$. 

\begin{figure}
    \centering
    \begin{subfigure}[t]{0.5\textwidth}
    \centering
        \begin{tikzpicture}[scale=1]
    \node[circle, draw, minimum size=0.8cm] (0) at (0,0) {0};
    \node[circle, draw, minimum size=0.8cm] (1) at ({2*cos(90)}, {2*sin(90)}) {1};
    \node[circle, draw, minimum size=0.8cm] (2) at ({2*cos(-30)}, {2*sin(-30)}) {2};
    \node[circle, draw, minimum size=0.8cm] (3) at ({2*cos(-150)}, {2*sin(-150)}) {3};

    \draw (0) -- (1);
    \draw (0) -- (2);
    \draw (0) -- (3);
    
    \draw (2) -- (3);
    \draw (1) -- (3);

    \draw (1) to[out=60, in=120, looseness=8] (1);
    \draw (2) to[out=300, in=360, looseness=8] (2);
    \draw (3) to[out=180, in=240, looseness=8] (3);

\end{tikzpicture}
\vspace*{-0.68cm}
\caption{}\label{fig: nonCM three mass}
    \end{subfigure}\hspace*{0.5cm}%
        \begin{subfigure}{0.5\textwidth}
        \begin{tikzpicture}%
	[x={(-1cm, -0.395971cm)}, 
	y={(1cm, -0.401902cm)}, 
	z={(0.017418cm, 1cm)},
	scale=2.000000,
	back/.style={loosely dotted, thin},
	edge/.style={color=black, thick},
	facet/.style={fill=green,fill opacity=0.200000},
	vertex/.style={inner sep=1pt,circle,draw=blue!25!black,fill=blue!75!black,thick},
    baseline=-\the\dimexpr\fontdimen22\textfont2\relax]
%
%
\coordinate (0.00000, 0.00000, 1.00000) at (0.00000, 0.00000, 1.00000);
\coordinate (0.00000, 0.00000, 2.00000) at (0.00000, 0.00000, 2.00000);
\coordinate (0.00000, 1.00000, 0.00000) at (0.00000, 1.00000, 0.00000);
\coordinate (2.00000, 0.00000, 0.00000) at (2.00000, 0.00000, 0.00000);
\coordinate (0.00000, 2.00000, 0.00000) at (0.00000, 2.00000, 0.00000);
\coordinate (1.00000, 0.00000, 0.00000) at (1.00000, 0.00000, 0.00000);
\draw[edge,back] (0.00000, 0.00000, 1.00000) -- (0.00000, 0.00000, 2.00000);
\draw[edge,back] (0.00000, 0.00000, 1.00000) -- (0.00000, 1.00000, 0.00000);
\draw[edge,back] (0.00000, 0.00000, 1.00000) -- (1.00000, 0.00000, 0.00000);
\draw[edge,back] (0.00000, 1.00000, 0.00000) -- (0.00000, 2.00000, 0.00000);
\draw[edge,back] (0.00000, 1.00000, 0.00000) -- (1.00000, 0.00000, 0.00000);
\draw[edge,back] (2.00000, 0.00000, 0.00000) -- (1.00000, 0.00000, 0.00000);
\node[vertex] at (0.00000, 1.00000, 0.00000)     {};
\node[vertex] at (1.00000, 0.00000, 0.00000)     {};
\node[vertex] at (0.00000, 0.00000, 1.00000)     {};
\node[vertex] at (0.00000, 1.00000, 1.00000)     {};
\node[draw, circle,inner sep=2pt] at (1.00000, 1.00000, 0.00000)     {};
\node[vertex] at (1.00000, 0.00000, 1.00000)     {};
\fill[facet] (0.00000, 2.00000, 0.00000) -- (0.00000, 0.00000, 2.00000) -- (2.00000, 0.00000, 0.00000) -- cycle {};
\draw[edge] (0.00000, 0.00000, 2.00000) -- (2.00000, 0.00000, 0.00000);
\draw[edge] (0.00000, 0.00000, 2.00000) -- (0.00000, 2.00000, 0.00000);
\draw[edge] (2.00000, 0.00000, 0.00000) -- (0.00000, 2.00000, 0.00000);
\node[vertex] at (0.00000, 0.00000, 2.00000)     {};
\node[vertex] at (2.00000, 0.00000, 0.00000)     {};
\node[vertex] at (0.00000, 2.00000, 0.00000)     {};
\end{tikzpicture}
\caption{}\label{fig: nonCMpolytope3mass}
    \end{subfigure}
\caption{(A) Kinematic graph of a three point Feynman integral with a non-CM semi group $k[H]$ and (B) the corresponding polytope. Note that the missing edge between the vertices 1 and 2 corresponds to the missing point i.e the hole in the polytope in (B).}
\end{figure}
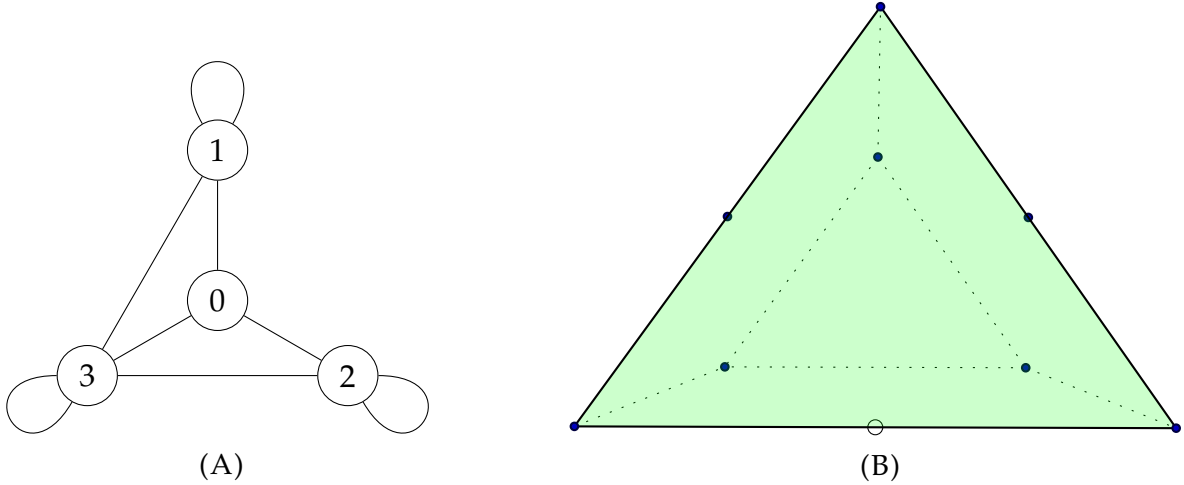
The only way $\widetilde{Q}=Q$ can fail is if a hole $\mathbf{b}_j$ of $Q$ lie in every $\NN A-\NN F_i$, and this is an \emph{integer linear program}: 
\begin{equation}\label{eq: ILP}
    \left(A|-F_i\right)\boldsymbol{x}=\mathbf{b}_j,\quad \xx\ge 0,
\end{equation}
for every facet $i$. We now get the following corollary of Proposition \ref{prop: (ii) always satisfied}:
\begin{corollary}\label{cor: ilp not CM}
    $\widetilde{Q}\neq Q$, i.e. $k[Q]$ is \textbf{not} Cohen-Macaulay, if and only if their exists a hole $\mathbf{b}_j$ of $Q$ such that integer linear program \eqref{eq: ILP} has a solution for every facet $F_i$ of $Q$.
\end{corollary}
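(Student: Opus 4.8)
The statement has two layers, and I would separate them at the outset. The equivalence between non-Cohen-Macaulayness and $\widetilde{Q}\neq Q$ is already in hand: by Proposition \ref{prop: (ii) always satisfied}, condition (ii)' of Theorem \ref{thm: Cohen-Macaulay Matusevich Yu} always holds for one-loop semigroups, so $k[Q]$ fails to be Cohen-Macaulay exactly when condition (i), namely $\widetilde{Q}=Q$, fails. The remaining work is therefore purely to recast the combinatorial condition $\widetilde{Q}\neq Q$ as feasibility of the family of integer linear programs \eqref{eq: ILP}.

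First I would record the trivial inclusion $Q\subseteq\widetilde{Q}$: since $0\in\NN F_i$ for every facet, $Q=Q-0\subseteq Q-\NN F_i$ for each $i$, hence $Q\subseteq\bigcap_i(Q-\NN F_i)=\widetilde{Q}$. Consequently $\widetilde{Q}\neq Q$ is equivalent to $\widetilde{Q}\setminus Q\neq\emptyset$. Next I would unwind the ILP: taking the columns of $F_i$ to be the generators lying on that facet, a lattice point $\mathbf{b}_j$ admits a solution of \eqref{eq: ILP} for $F_i$ precisely when $\mathbf{b}_j=A\mathbf{y}-F_i\mathbf{z}$ with $\mathbf{y},\mathbf{z}$ non-negative integer vectors, i.e. precisely when $\mathbf{b}_j\in Q-\NN F_i$. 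Thus \eqref{eq: ILP} is feasible for every facet if and only if $\mathbf{b}_j\in\bigcap_i(Q-\NN F_i)=\widetilde{Q}$.

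The one substantive point, and the crux of the argument, is to show that every element witnessing $\widetilde{Q}\neq Q$ is in fact a hole, i.e. $\widetilde{Q}\subseteq\ZZ A\cap\RR_{\ge0}Q$. The lattice containment $\widetilde{Q}\subseteq\ZZ A$ is clear since $Q,\NN F_i\subseteq\ZZ A$. For the cone containment, let $\mathbf{n}_i$ be a primitive inner normal of the facet-defining hyperplane $\scrH_i$, so that $\ip{\mathbf{n}_i,\cdot}\ge0$ on $\RR_{\ge0}A$ and $\ip{\mathbf{n}_i,\cdot}=0$ on $\NN F_i$ (the generators of $F_i$ lie on the supporting hyperplane $\scrH_i$, and $\NN F_i$ is generated by them). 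If $\mathbf{b}\in\widetilde{Q}$, then for each $i$ we may write $\mathbf{b}=q_i-f_i$ with $q_i\in Q$ and $f_i\in\NN F_i$, whence $\ip{\mathbf{n}_i,\mathbf{b}}=\ip{\mathbf{n}_i,q_i}\ge0$. As the cone $\RR_{\ge0}A=\RR_{\ge0}Q$ is cut out exactly by the facet inequalities $\ip{\mathbf{n}_i,\cdot}\ge0$, this forces $\mathbf{b}\in\RR_{\ge0}Q$, completing the containment.

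Putting these together finishes both directions. If $\widetilde{Q}\neq Q$, pick $\mathbf{b}\in\widetilde{Q}\setminus Q$; by the containment $\mathbf{b}\in(\ZZ A\cap\RR_{\ge0}Q)\setminus Q=\cH(Q)$ is a hole, and $\mathbf{b}\in\widetilde{Q}$ means \eqref{eq: ILP} is feasible for every facet. Conversely, if some hole $\mathbf{b}_j$ makes \eqref{eq: ILP} feasible for every facet, then $\mathbf{b}_j\in\widetilde{Q}\setminus Q$, so $\widetilde{Q}\neq Q$. I expect the only genuine obstacle to be the cone containment $\widetilde{Q}\subseteq\RR_{\ge0}Q$; everything else is bookkeeping. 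The delicate ingredient there is simply that the facets of the semigroup coincide with those of its cone, so that the normals $\mathbf{n}_i$ exhaust the defining inequalities of $\RR_{\ge0}Q$ and no spurious inequality is missed.
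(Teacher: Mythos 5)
Your proof is correct and follows essentially the same route as the paper: reduce Cohen--Macaulayness to condition (i) via Proposition~\ref{prop: (ii) always satisfied}, identify membership in $Q-\NN F_i$ with feasibility of \eqref{eq: ILP}, and observe that any witness of $\widetilde{Q}\neq Q$ must be a hole. The paper treats this as immediate from the surrounding discussion; your write-up merely supplies the one detail it leaves implicit, namely the verification that $\widetilde{Q}\subseteq \ZZ A\cap\RR_{\ge 0}Q$ via the facet normals.
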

For the $A$-matrix in \eqref{eq: AmatrixTrianglenotCM}, the semigroup $\NN A$ has a hole $\mathbf{b}=(1,1,1,0)^T$, corresponding to the missing edge between vertex 1 and 2 in Figure \ref{fig: nonCM three mass}. It is easy to check explicitly that all five integer linear programs \eqref{eq: ILP} have a solution, meaning that $\widetilde{Q}\neq Q$ and $k[\NN A]$ is not Cohen-Macaulay, which we of course already knew. 

The graph in Figure \ref{fig: nonCM three mass} is very small and it turns out that any kinematic graph with this graph as a subgraph is not Cohen-Macaulay: 
\begin{theorem}\label{thm: massive sufficient not CM}
    Let $H$ be a kinematic graph with all internal masses being non-zero and $B$ its edge matrix. If, up to relabelling, $B$ contains the graph in Figure \ref{fig: nonCM three mass} as a subgraph, then the semigroup ring $k[\NN B]$ is not Cohen-Macaulay.
\end{theorem}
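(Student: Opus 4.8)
The plan is to produce a single hole $\mathbf{b}$ of $Q:=\NN B$ lying in $Q-\NN F_i$ for every facet $F_i$, and then invoke Corollary~\ref{cor: ilp not CM}. Since Proposition~\ref{prop: (ii) always satisfied} guarantees that condition (ii)' of Theorem~\ref{thm: Cohen-Macaulay Matusevich Yu} holds automatically for one-loop edge matrices, the only thing to check is that condition (i) fails, i.e.\ that there is a hole $\mathbf{b}$ for which the integer linear program \eqref{eq: ILP} is solvable at every facet $F$ of $Q$; concretely, for each facet $F$ I must exhibit $y\in\NN F$ with $\mathbf{b}+y\in\NN B$. As a first step I would record the facets: because all internal masses are non-zero they are exactly the hyperplanes in \eqref{eq: facets all massive}, namely the coordinate hyperplanes $\mathscr{H}_0,\ldots,\mathscr{H}_d$ together with $\mathscr{H}_{\{0\}}=\{x_0=\sum_{j\ge 1}x_j\}$.

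Next I would read off the relevant combinatorial data from the embedded copy of Figure~\ref{fig: nonCM three mass}: distinct internal vertices $a,b,c\in\{1,\ldots,d\}$, all carrying loops (they must, as every mass is non-zero), with $\{a,c\}$ and $\{b,c\}$ edges of $H$ but $\{a,b\}$ not an edge. Because $H$ is the kinematic graph of a one-loop integral, every hub edge $\{0,i\}$ is present, so $\rho(\{0,a\}),\rho(\{0,b\})\in B$ and I may use vertex $0$ as the apex of the configuration regardless of how the copy is embedded. I then claim $\mathbf{b}:=\mathbf{e}_a+\mathbf{e}_b$ is a hole. It lies in the cone since $\tfrac12(2\mathbf{e}_a)+\tfrac12(2\mathbf{e}_b)=\mathbf{e}_a+\mathbf{e}_b$, and in the lattice $\ZZ B$ since $\rho(\{a,c\})+\rho(\{b,c\})-2\mathbf{e}_c=\mathbf{e}_a+\mathbf{e}_b$. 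It is not in $\NN B$ because every generator has coordinate sum $2$, so any semigroup element of coordinate sum $2$ is a single generator, whereas $\mathbf{e}_a+\mathbf{e}_b$ is neither a loop nor---since $\{a,b\}\notin H$---an edge.

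The heart of the argument is then to solve \eqref{eq: ILP} at each facet, and I expect the two identities
\begin{equation*}
    \mathbf{b}+2\mathbf{e}_c=\rho(\{a,c\})+\rho(\{b,c\}),\qquad
    \mathbf{b}+\rho(\{0,a\})=2\mathbf{e}_a+\rho(\{0,b\})
\end{equation*}
to do all the work. In the first, the witness $y=2\mathbf{e}_c$ lies on $\mathscr{H}_v$ for every $v\neq c$ and so settles all coordinate facets except $\mathscr{H}_c$. In the second, the witness $y=\rho(\{0,a\})$ lies on $\mathscr{H}_c$ (as $c\notin\{0,a\}$) and on $\mathscr{H}_{\{0\}}$ (a hub edge satisfies $x_0=\sum_{j\ge1}x_j$), covering the two remaining facets. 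As both right-hand sides are honest non-negative integer combinations of generators of $\NN B$, this would show \eqref{eq: ILP} is solvable for every facet, hence $\widetilde{Q}\neq Q$ and $k[\NN B]$ is not Cohen-Macaulay.

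I expect the main obstacle to be the facet $\mathscr{H}_c$ (and likewise $\mathscr{H}_{\{0\}}$): the loop witness $2\mathbf{e}_c$ is useless there because it is off the facet, so the correction must instead be routed through a hub edge---this is precisely where the automatic presence of $\{0,a\}$ and $\{0,b\}$ in a one-loop graph is indispensable. The other point needing care is the justification of the facet list \eqref{eq: facets all massive} in the presence of vanishing Mandelstam invariants; this should follow because every internal vertex carries a loop (so every vertex is regular and $\{0\}$ stays the unique fundamental set), but I would verify it explicitly before relying on it.
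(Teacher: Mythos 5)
Your proof is correct, and it follows the same overall strategy as the paper --- both arguments fix the hole $\mathbf{b}=\mathbf{e}_a+\mathbf{e}_b$ coming from the missing edge of the embedded triangle configuration, use Proposition~\ref{prop: (ii) always satisfied} to reduce everything to condition (i), and conclude via Corollary~\ref{cor: ilp not CM} by showing the integer linear program \eqref{eq: ILP} is feasible at every facet from the list \eqref{eq: facets all massive}. The execution differs in a way worth noting: the paper first checks (implicitly, by computation) that all five programs for the bare triangle edge matrix $B'$ are feasible, and then lifts those solutions to the ambient graph by a block-matrix argument, splicing in zeros and using that the columns of $B'$ land inside every facet $F_i$ with $i\ge 4$. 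You instead exhibit explicit witnesses directly in the big graph: $y=2\mathbf{e}_c$ (via $\mathbf{b}+2\mathbf{e}_c=\rho(\{a,c\})+\rho(\{b,c\})$) for all coordinate facets other than $\mathscr{H}_c$, and $y=\rho(\{0,a\})$ (via $\mathbf{b}+\rho(\{0,a\})=2\mathbf{e}_a+\rho(\{0,b\})$) for $\mathscr{H}_c$ and $\mathscr{H}_{\{0\}}$. Your version is more self-contained --- it makes visible exactly which structural features are used (loops at $a,b,c$, the edges $\{a,c\},\{b,c\}$, the always-present hub edges, and the absence of $\{a,b\}$) and replaces the paper's unexhibited base-case computation with two one-line identities --- at the cost of not illustrating the lifting technique, which is the reusable part of the paper's argument. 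Your verification that $\mathbf{b}$ is genuinely a hole (cone membership via the two loops, lattice membership via $\rho(\{a,c\})+\rho(\{b,c\})-2\mathbf{e}_c$, non-membership by the coordinate-sum-$2$ argument) and your justification of the facet list for fully massive graphs are both correct and are details the paper leaves to the reader.
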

\begin{proof}
    Let $B'$ denote the edge matrix for the graph \ref{fig: nonCM three mass}, i.e.
    \begin{equation}
        B'=\begin{pmatrix}
            1&1&1&0&0&0&0&0\\
            1&0&0&1&0&2&0&0\\
            0&1&0&0&1&0&2&0\\
            0&0&1&1&1&0&0&2
        \end{pmatrix}.
    \end{equation}
Assume $H$ contains \ref{fig: nonCM three mass} as a subgraph, then without loss of generality we can choose a labelling of the vertices coinciding with \ref{fig: nonCM three mass} so the edge matrix of this graph has the form
    \begin{equation}\label{eq: B massive proof}
        B=\left(
        \begin{array}{c|c}
            B' & * \\
            \hline
             0& *
        \end{array}
        \right).
    \end{equation}
    The semigroup $\NN B'$ has a hole at $\mathbf{b}'=(0,\,1,\,1,\,0)^T$, meaning that $\NN B$ has a hole at $(0,\,1,\,1,\,0,\ldots,0)^T$. We have already seen that the integer linear program \eqref{eq: ILP} for $B'$ has a solution on all five  and we will see that these solutions lift to solutions for $B$.

    Since all internal masses are non-zero, the facets of $\NN B$ are simply given by \eqref{eq: facets all massive}, for the facets of $B$ we use the notation
    \begin{equation*}
        F_i=\scrH_i\cap B,\qquad F_{\{0\}}=\scrH_{\{0\}}\cap B.
    \end{equation*}
    By having the matrix $B$ on the special form \eqref{eq: B massive proof}, we have the following inclusions thought of as sets of lattice points
    \begin{equation*}
    \left(\begin{array}{c}
         F'_i \\
         \hline
         0 
    \end{array}
    \right)\subset F_i,\ i=0,\ldots,3\ \mathrm{or}\ \{0\},\qquad
        \left(\begin{array}{c}
            B' \\
            \hline
             0
        \end{array}\right)\subset F_i,\ i=4,\ldots,d.
    \end{equation*}
    We will construct solutions for these two cases separately.
    
    For the first case, we can assume without loss of generality that the augmented matrix $B_i=(B|-F_i)$ has the form
    \begin{equation*}
        B_i=\left(\begin{array}{c|c|c|c}
        B'&*&-F_i'&*\\
        \hline
        0&*&0&*
        \end{array}
        \right),\ i=0,\ldots,3\ \mathrm{or}\ \{0\}.
    \end{equation*}
    If $\xx'$ is a solution to the linear program $B_i'\xx'=(B'|-F_i')\xx'=\mathbf{b}',\ \xx'\ge 0$, then it lifts to a solution $\xx$ of the program $B_i\xx=\mathbf{b},\ \xx\ge 0,$ by splicing in zeros in the positions corresponding to the $*$s in $B_i$.
    
    For the second case, we can assume without loss of generality that the augmented matrix $B_i$ has the form
    \begin{equation*}
        B_i=\left(\begin{array}{c|c|c|c}
        B'&*&-B'&*\\
        \hline
        0&*&0&*
        \end{array}
        \right),\ i=4,\ldots,d.
    \end{equation*}
    This means again that any solution $\xx'$ can be lifted to a solution $\xx$ by splicing in zeros in the correct positions.

    We conclude that the integer linear programs $B_i\xx=\mathbf{b},\ \xx\ge 0,$ with $B_i=(B|-F_i)$ has a solution for every facet $i=0,\ldots,n$ or $\{0\}$. By Corollary \ref{cor: ilp not CM} it follows that $k[\NN B]$ is not Cohen-Macaulay.
\end{proof}
\begin{definition}
    The \textit{generalised linear program} of a kinematic graph is the linear program in which the vectors of every edge can be added and subtracted i.e. $(B|-B)\boldsymbol{x}=\mathbf{b}$.
\end{definition}
The following Lemma \ref{equivalence} allows for identifying solutions to the linear program with paths in the kinematic graph, yielding a graphical way of determining if the linear programs are feasible and thereby if the associated semigroup ring $k[\NN B]$ is Cohen-Macaulay. 
\begin{lemma}\label{equivalence}
    There is a bijective correspondence between solutions of the generalised linear program and paths\footnote{We do not require all vertices to be distinct in a path.} of odd length between massive vertices in the kinematic graph (that will be called \textit{connecting paths}).
\end{lemma}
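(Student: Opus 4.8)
The plan is to exhibit two explicit maps and prove they are mutually inverse. Throughout I write $\mathbf{b}=\mathbf{e}_i+\mathbf{e}_j$ for the hole determined by the two massive vertices $i,j$, and record a solution of $(B|-B)\boldsymbol{x}=\mathbf{b}$ as a pair $\boldsymbol{x}=(\boldsymbol{x}^{+},\boldsymbol{x}^{-})$ of non-negative integer vectors indexed by the edges and loops of the kinematic graph, so that the equation reads $B\boldsymbol{x}^{+}-B\boldsymbol{x}^{-}=\mathbf{b}$.

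First I would treat the forward map, from connecting paths to solutions, which is the easy direction. Given a walk $i=v_0,v_1,\dots,v_\ell=j$ of odd length $\ell$ with (not necessarily distinct) edges $e_0,\dots,e_{\ell-1}$, where a step with $v_s=v_{s+1}$ uses the loop at that vertex, I put $e_s$ into $\boldsymbol{x}^{+}$ when $s$ is even and into $\boldsymbol{x}^{-}$ when $s$ is odd. Since $\rho(e_s)=\mathbf{e}_{v_s}+\mathbf{e}_{v_{s+1}}$ uniformly (a loop contributing $2\mathbf{e}_{v_s}$), the signed sum $\sum_{s}(-1)^s\rho(e_s)$ telescopes in the step index: the vector attached to each intermediate step $1\le s\le\ell-1$ cancels because $(-1)^{s-1}+(-1)^s=0$, and only the endpoints survive, with weight $+1$ (using that $\ell$ is odd for the weight at $v_\ell$). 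Hence $B\boldsymbol{x}^{+}-B\boldsymbol{x}^{-}=\mathbf{e}_i+\mathbf{e}_j=\mathbf{b}$. This already shows the map is well defined, and explains why the length must be odd and why a loop at a massive vertex is precisely what lets an otherwise even connection between $i$ and $j$ acquire odd length.

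The substance is the inverse map, from solutions to paths. Given $\boldsymbol{x}$ I would form the two-coloured multigraph containing $x^{+}_e$ copies of each edge $e$ coloured $+$ and $x^{-}_e$ copies coloured $-$. Reading off the $v$-th coordinate of $B\boldsymbol{x}^{+}-B\boldsymbol{x}^{-}=\mathbf{b}$ yields signed-degree conditions: at each vertex the number of incident $+$-edges minus the number of incident $-$-edges, with loops counted twice, equals $1$ for $v\in\{i,j\}$ and $0$ otherwise; summing over coordinates gives $|\boldsymbol{x}^{+}|_1=|\boldsymbol{x}^{-}|_1+1$. Because consecutive steps of a walk occupy positions of opposite parity, recovering a path amounts to ordering the coloured edge-copies into a single trail that alternates $+,-,+,\dots,+$ and runs from $i$ to $j$, i.e.\ an alternating Eulerian trail of this multigraph. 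I would prove its existence by induction on the number of edges—peeling off a maximal alternating trail and using the signed-degree balance to reattach the remaining closed alternating pieces—and then check that applying the forward map returns $\boldsymbol{x}$, so the two maps invert each other.

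The hard part will be the inverse direction, and specifically two points. First, degree-counting alone only produces an alternating Eulerian decomposition once the relevant part of the multigraph is connected; I must rule out a solution whose support splits off a disjoint closed alternating walk (an element of $\ker(B|-B)$ supported away from $i,j$), since such an $\boldsymbol{x}$ cannot be realised by a single walk. I expect to handle this using the special shape of \eqref{eq: edge matrix B}, in which every vertex is joined to vertex $0$, so that any nontrivial colour-balanced component can be absorbed into the $i$–$j$ trail, keeping the correspondence surjective onto genuinely connected solutions. Second, to upgrade the feasibility equivalence ``a solution exists iff a connecting path exists'' into an honest bijection I must pin down the ordering ambiguity among alternating trails; the plan is to fix a deterministic traversal rule (always leaving the current vertex along the lowest-indexed available edge of the required colour), which makes the assignment solution $\mapsto$ path single-valued and strictly inverse to path $\mapsto$ solution.
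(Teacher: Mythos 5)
Your forward direction is exactly the paper's argument: identify each step of the walk with a column of $B$, place odd-position steps in the $+$ block and even-position steps in the $-$ block, and let the signed sum telescope so that only $\mathbf{e}_i+\mathbf{e}_j$ survives; the paper's proof of Lemma~\ref{equivalence} does precisely this, and likewise proves the inverse direction by reading off the signed-degree conditions at each vertex from $(B|-B)\boldsymbol{x}=\mathbf{b}$. Where you differ is that you push the inverse direction further than the paper does: the published proof stops at ``$u$ and $v$ have odd degree, all other vertices even degree, hence $P$ is a connecting path,'' whereas you correctly observe that this degree count alone does not produce a single alternating walk --- one still needs the support of $\boldsymbol{x}$ to be connected and needs an ordering of the edge-copies into an alternating Eulerian trail (the signed-degree balance is exactly the local condition for such a trail, so your induction would go through on a connected support). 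Your two worries are therefore legitimate and in fact expose a looseness in the paper's own argument. Be aware, though, that your proposed rescue of connectivity does not fully work: a solution may contain a colour-balanced closed component (e.g.\ a $+/-$ pair on one edge) whose vertices are disjoint from the $i$--$j$ trail, and the fact that every vertex is adjacent to vertex $0$ in the kinematic graph does not force that component to share a vertex with the trail, so it cannot always be spliced in. Consequently the map from connecting paths to solutions is genuinely not surjective onto all of $\NN^{2n}\cap(B|-B)^{-1}(\mathbf{b})$, and no choice of traversal rule makes the correspondence a literal bijection. This does not damage anything downstream --- Theorem~\ref{massivetheorem} and Corollary~\ref{cor: ilp not CM} only use the feasibility equivalence ``the generalised linear program has a solution if and only if a connecting path exists,'' and both your argument and the paper's establish that --- but if you want to state a bijection you must either restrict to solutions with connected support or weaken the conclusion to an equivalence of existence.
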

\begin{proof}
    We construct a canonical bijection that maps any connecting path to exactly one solution of the corresponding linear program and whose inverse maps any solution of the linear program to one connecting path. 
    
    Let $H$ be a kinematic graph on the vertex set $\{0,\ldots,d\}$  with edge matrix $B=(\boldsymbol{{\mathfrak{b}}}_1,\dots,\boldsymbol{{\mathfrak{b}}}_n)$, and with a hole $\mathbf{b}$: 
    \begin{equation}
    \mathbf{b}=(0, \dots,0,1,0, \dots,0,1,0,\dots,0)^T\in\mathbb{N}^{d+1}
    \end{equation}
    such that $\mathbf{b}$ corresponds to the missing edge $\{u,v\}$, between the two massive vertices $u$ and $v$, i.e. the ones in $\mathbf{b}$ are at position $u$ and $v$. 
    Let $P=\{e_1,\dots,e_m\}$ be a connecting path, connecting $u$ with $v$, where $e_i$ labels the $i$th edge in the path. There is a canonical bijective correspondence between an edge $e_i\in P$ and an index $k\in\{1,\ldots,n\}$ labelling a column of $B$:   
    \begin{equation}
    e_i\leftrightarrow\boldsymbol{\mathfrak{b}_k}\leftrightarrow k
    \end{equation}
    by the identification of every edge in the kinematic graph $H$ with a column in $B$.
    Using this we construct a map that identifies every edge in the path with a unit basis vector $\mathbf{e}_j\in \RR^{2n}$ as:
    \begin{equation}\label{eq: ei map}
        e_i\mapsto\begin{cases}
            \mathbf{e}_k, \ &\text{for $i$ odd,}\\
            \mathbf{e}_{n+k}, \ &\text{for $i$ even,}
        \end{cases}
    \end{equation}
    such that the whole path $P$ gets mapped to:
    \begin{equation}\label{eq: map}
        P\mapsto\sum_{i\ \mathrm{odd}}\mathbf{e}_i+\sum_{i\ \mathrm{even}}\mathbf{e}_{i+n}=:\boldsymbol{x}
    \end{equation}
    where both sums are from $1$ to $m$. With this $\boldsymbol{x}$ we get for the left hand side of the linear program
    \begin{equation}\label{eq: cancellations in path}
    (B|-B)\boldsymbol{x}=\sum_{i=1}^n\boldsymbol{\mathfrak{b}}_ix_i-\sum^{2n}_{i=n+1}\boldsymbol{{\mathfrak{b}}}_{i-n}x_i.  
    \end{equation}
    By assumption, the vertices $u$ and $v$ are different so in the path $P$ every vertex has even degree except $u$ and $v$. Equation \eqref{eq: cancellations in path} then gives that the contributions from all internal vertices cancel and we are left with
    \begin{equation}
        (B|-B)\boldsymbol{x}=\mathbf{b}.
    \end{equation}
    Hence, every connecting path maps to a solution of the linear program.
    
    For the inverse, let us assume that there is a solution to the linear program $\boldsymbol{x}\in\NN^{2n}$. Each $\boldsymbol{x}$ can be written in terms of the basis vectors using \eqref{eq: map} and then mapped to a family of potentially repeated edges $\{e_1,\ldots,e_m\}$. All that remains is to show that this is a connecting path. 
    
    Since $\boldsymbol{x}$ is a solution to the linear program, there exists at least one edge in this set containing the vertex $u$ and at least one edge containing $v$. Since $\mathbf{b}$ has non-zero elements at entries $u,v$, the degree of these vertices must be odd. Similarly, since all other entries of $\mathbf{b}$ are zero, there can be no other vertex with odd degree. We conclude that $P$ is a connecting path of odd length, as entries in \eqref{eq: cancellations in path} cancel pairwise.
\end{proof}
\begin{remark}
    Lemma \ref{equivalence} can be used to determine the Cohen-Macaulayness of a given one-loop Feynman integral by identifying the connecting paths of the holes in the kinematic graph and verifying if there is at least one connecting path for which all alternating edges (starting from the second edge in the path) are contained in the facet of the corresponding linear program. In that case, the linear program of that facet is feasible, and if the same holds for all linear programs of the graph, the associated semigroup ring is not Cohen-Macaulay. Conversely, if there is a facet that does not contain alternating edges for any connecting path, the associated semigroup ring is Cohen-Macaulay. Consider for example the graph in Figure \ref{fig: special massive bubble}: All of the connecting paths of the hole $\mathbf{b}=(0,1,1)^T$, contain the edges $1-0$ and $0-2$, following each other. Since the $\mathscr{H}_0$ facet contains no edges connected to the zero-vertex, it does not contain $1-0$ and $0-2$, which means the corresponding linear program is infeasible and the associated semi group ring is Cohen-Macaulay.
\end{remark}
\begin{theorem}\label{massivetheorem}
    The affine semi group ring of a fully massive one-loop Feynman integral is Cohen-Macaulay if and only if all the holes are between different connected components of the reduced kinematic graph $H_{[0,d]\setminus0}$.
\end{theorem}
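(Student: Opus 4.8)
The plan is to run everything through the reduction already set up in this section. By Proposition~\ref{prop: (ii) always satisfied} condition (ii)$'$ is automatic, so $k[\NN B]$ is Cohen--Macaulay exactly when $\widetilde Q=Q$, and by Corollary~\ref{cor: ilp not CM} this fails precisely when some hole $\mathbf b$ is feasible for the program \eqref{eq: ILP} at \emph{every} facet. The facets are the $d+2$ hyperplanes of \eqref{eq: facets all massive}, and my first move is to isolate $\mathscr{H}_0=\{x_0=0\}$ as the only facet that can ever obstruct feasibility. For a generator hole $\mathbf b=\mathbf{e}_u+\mathbf{e}_v$ coming from a missing edge $\{u,v\}$ I would exhibit the two length-three connecting paths $\big(\mathrm{loop}(u),\{u,0\},\{0,v\}\big)$ and $\big(\{u,0\},\{0,v\},\mathrm{loop}(v)\big)$, whose single even-indexed (subtracted) edges are $\{0,u\}$ and $\{0,v\}$; a one-line check gives $(B\,|\,{-}B)\mathbf x=\mathbf b$. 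Since $\{0,u\},\{0,v\}\in\mathscr{H}_{\{0\}}$ and $\{0,u\}\in\mathscr{H}_i$ for $i\neq 0,u$ while $\{0,v\}\in\mathscr{H}_i$ for $i\neq 0,v$, and since $u\neq v$, these two paths certify — via Lemma~\ref{equivalence} and the following remark — feasibility at $\mathscr{H}_{\{0\}}$ and at every $\mathscr{H}_i$ with $i\ge 1$. Thus for such a hole only the facet $\mathscr{H}_0$ can possibly obstruct feasibility.

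It remains to understand $\mathscr{H}_0$, whose lattice points are exactly the edges and loops avoiding the vertex $0$. If the missing edge $\{u,v\}$ joins two vertices in the \emph{same} component of $H_{[0,d]\setminus 0}$, I pick a walk from $u$ to $v$ inside that component and, inserting one loop if needed, make it odd; it is then a connecting path all of whose edges — in particular the even-indexed ones — avoid $0$, so the $\mathscr{H}_0$-program is feasible too. Combined with the first paragraph, the hole $\mathbf{e}_u+\mathbf{e}_v$ is then feasible at all $d+2$ facets, and Corollary~\ref{cor: ilp not CM} gives that $k[\NN B]$ is \emph{not} Cohen--Macaulay. This settles the direction that if some hole lies within a single component, the ring is not Cohen--Macaulay.

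For the converse I assume every missing edge joins different components, so each component $C_1,\dots,C_r$ of $H_{[0,d]\setminus 0}$ is a complete graph with a loop at every vertex. Here the honest difficulty appears: the holes of $\NN B$ are \emph{not} only the vectors $\mathbf{e}_u+\mathbf{e}_v$ (e.g.\ $3\mathbf{e}_u+\mathbf{e}_v$ is again a hole), so I cannot finish by inspecting connecting paths of generator holes alone and must control \emph{all} holes simultaneously. The clean device is the integer functional $\phi(\mathbf z):=\#\{t:Z_t\ \text{odd}\}-z_0$, where $Z_t:=\sum_{i\in C_t}z_i$. Writing membership in $\NN B$ componentwise — each $\{0,i\}$ contributes one unit to $z_0$ and one to its component, and the complete-graph-with-loops part of a component is reachable exactly when its coordinate sum is even — one shows $\mathbf z\in\NN B$ iff $\mathbf z\in\NN\overline B$ and $\phi(\mathbf z)\le 0$; in particular every hole satisfies $\phi>0$.

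Now $\mathscr{H}_0$ finishes the job: any $\mathbf f\in\NN F_0$ has $f_0=0$ and even coordinate sum on each component, so $\phi(\mathbf h+\mathbf f)=\phi(\mathbf h)$. Hence if $\mathbf h$ is a hole then $\phi(\mathbf h+\mathbf f)=\phi(\mathbf h)>0$, whence $\mathbf h+\mathbf f\notin\NN B$ for all $\mathbf f$; that is, $\mathbf h\notin Q-\NN F_0$, and a fortiori $\mathbf h\notin\widetilde Q$. Therefore $\widetilde Q$ contains no hole, so $\widetilde Q=Q$ and $k[\NN B]$ is Cohen--Macaulay, completing the equivalence. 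I expect the parity functional $\phi$ — and the verification that it is both positive on holes and invariant under adding facet-$\mathscr{H}_0$ generators — to be the crux, since it is exactly what upgrades the per-hole path criterion of Lemma~\ref{equivalence} into a statement about the entire infinite set of holes.
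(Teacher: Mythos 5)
Your proof is correct, and its overall architecture coincides with the paper's: reduce to condition (i) of Theorem~\ref{thm: Cohen-Macaulay Matusevich Yu} via Proposition~\ref{prop: (ii) always satisfied} and Corollary~\ref{cor: ilp not CM}, certify feasibility of the programs \eqref{eq: ILP} at $\mathscr{H}_{\{0\}}$ and at every $\mathscr{H}_i$, $i\ge 1$, by length-three connecting paths through the vertex $0$ (Lemma~\ref{equivalence}), and isolate $\mathscr{H}_0$ as the only facet that can obstruct, with a within-component walk settling the non-Cohen--Macaulay direction. Where you genuinely depart from the paper is the converse. The paper's argument at $\mathscr{H}_0$ is a parity count ("every column sums to $\pm 2$") applied to a hole "between two components", i.e.\ it only treats the degree-two holes $\mathbf{e}_u+\mathbf{e}_v$ coming from missing edges; but Corollary~\ref{cor: ilp not CM} quantifies over \emph{all} holes of $Q$, and this set is infinite (e.g.\ $3\mathbf{e}_u+\mathbf{e}_v$, or $\mathbf{e}_0+\mathbf{e}_u+\mathbf{e}_v+\mathbf{e}_w$ with $u,v,w$ in three components, are holes with $h_0$ not necessarily zero). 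You correctly flag this as the real difficulty and close it with the functional $\phi(\mathbf z)=\#\{t: Z_t \text{ odd}\}-z_0$: the membership criterion $\mathbf z\in\NN B\iff\mathbf z\in\NN\overline B \text{ and } \phi(\mathbf z)\le 0$ checks out (the complete-with-loops components realize every nonnegative even-sum vector, and the spoke count $z_0$ must dominate the number of odd components, with parities matching because the total coordinate sum is even), and since every generator of $\NN F_0$ is supported in a single component with even sum there, $\phi$ is invariant under adding $\NN F_0$, so no hole lies in $Q-\NN F_0$ and hence none lies in $\widetilde Q$. This is the same parity obstruction the paper gestures at, but packaged as a global invariant that disposes of all holes at once rather than one missing edge at a time; it is a strictly more complete argument and would be a worthwhile strengthening of the published proof.
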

If there are no holes, the semigroup is normal, so for the proof we assume that there exists at least one hole.
\begin{proof}
    The facets of a fully massive one-loop integral are given by \eqref{eq: facets all massive}. For all facets $\scrH_{1,\ldots,d\{0\}}$ there is a path (of length three) that solves the linear program. 
    
    Every facet $\mathscr{H}_{i}$ with $i\in\{1,\dots,d\}$ corresponds to the reduced graph kinematic graph $H_{[0,d]\setminus i}$ from which follows that for every hole that exists in $H_{[0,d]\setminus i}$ there is a path through the $0$-vertex that connects the vertices of each side of the hole to each other. Every edge in the path connecting the two sides of the hole corresponds to vectors in $\scrH_i$. By Lemma~\ref{equivalence} this path generate a solution to the linear program. 
    
    The facet $\mathscr{H}_{\{0\}}$ only contains vectors from the $\mathcal{U}$-polynomial i.e. the $0$-vertex and the connections to the individual vertices, which allows for paths connecting all vertices, as the zero-vertex is connected to every vertex. 
    
    The only facet that can correspond to a graph with more than one connected component is $\mathscr{H}_0$ since the zero vertex is set to zero here. If the hole lies in a connected component of the reduced kinematic graph $H_{[0,d]\setminus0}$, there still exists a path in the graph connecting the two sides of the hole and thereby solving the corresponding linear program. If the hole is between two disjoint connected components of $H_{[0,d]\setminus0}$, the only way the linear program could have a solution would be if the two components individually summed to a vector with zeros everywhere except for a one at the vertex for the hole. This is impossible since every column of the matrix in the linear program sums to $\pm2$, with which one cannot generate any odd number.
\end{proof}
\section{On general one-loop Feynman integrals}\label{sec: general one-loop}
In this section we include massless particles and make use of the results from the previous section to prove a criterium for Cohen-Macaulayness that is applicable to the most general one-loop Feynman integral and condenses to a full classification for up to three-point integrals.

In the following, we only consider Feynman integrals whose kinematic graph $H_{[0,d]}$ contains a massive subgraph $\widetilde{H}_{[0,j]}$ with more than one massive vertex. Here we assume a labelling of the vertices such that $1,\ldots,j$ are the massive vertices. Since Feynman integrals whose kinematic graphs do not contain massive subgraphs (or subgraphs with only one massive vertex) are normal \cite[Corollary 5.6]{Dlapa:2023cvx} and therefore Cohen-Macaulay. We consider the zero-vertex as part of the massive subgraphs, which means that these subgraphs make up one connected component, since the zero-vertex is adjacent to every vertex in the graph.

\begin{theorem}\label{generaltheorem}
    A kinematic graph $H_{[0,d]}$ containing a fully massive subgraph $\widetilde{H}_{[0,j]}$ is Cohen-Macaulay if the holes of $\widetilde{H}_{[0,j]}$ lie in different connected components of $H_{[0,d] \setminus 0}$.
\end{theorem}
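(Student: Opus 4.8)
The plan is to verify the criterion of Corollary~\ref{cor: ilp not CM} in the contrapositive. By Proposition~\ref{prop: (ii) always satisfied}, the ring $k[\NN B]$ is Cohen-Macaulay as soon as $\widetilde{Q}=Q$ (here $Q=\NN B$), and by Corollary~\ref{cor: ilp not CM} this holds precisely when, for \emph{every} hole $\mathbf{b}$, there is at least one facet $F$ of $\NN B$ for which the integer linear program \eqref{eq: ILP} is infeasible. Every hole has the form $\mathbf{b}=\mathbf{e}_u+\mathbf{e}_v$ for a missing edge $\{u,v\}$ between two massive vertices, and the hypothesis places $u$ and $v$ in distinct connected components $C_u\neq C_v$ of the reduced graph $H_{[0,d]\setminus 0}$. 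So it suffices to exhibit, for each such hole, one facet whose program has no solution.

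First I would isolate the parity mechanism that forces infeasibility. Using Lemma~\ref{equivalence} and the remark after it, a facet program is feasible exactly when some connecting path of $\mathbf{b}$ has all of its alternating (subtracted) edges inside that facet. I would obstruct this by a parity count: if a facet $F$ retains, among the generators touching $C_u$, only those internal to $C_u$ (each of coordinate-sum two over the vertices of $C_u$) and kills every generator joining $C_u$ to the rest of the graph, then summing the defining equation of \eqref{eq: ILP} over the coordinates indexed by $C_u$ forces an even left-hand side, whereas $\mathbf{b}$ has coordinate-sum one over $C_u$. This contradiction rules out a solution, exactly as in the proof of Theorem~\ref{massivetheorem}.

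It therefore remains to produce, for each hole, a facet annihilating every generator leaving $C_u$, in particular the spokes $\{0,w\}$ with $w\in C_u$. When the vertex $0$ is regular, i.e. every component of $H_{[0,d]\setminus 0}$ carries an odd cycle (as happens whenever each component contains a massive vertex and hence a loop), the hyperplane $\scrH_0=\{x_0=0\}$ is a facet by Theorem~\ref{thm: facets edge polytope}; its generators are exactly the edges and loops of the reduced graph, the $x_0$-equation kills all spokes, and the parity count above applies verbatim, so the argument directly extends Theorem~\ref{massivetheorem}. The hard part is the general case: a purely massless component of $H_{[0,d]\setminus 0}$ without an odd cycle makes $0$ non-regular, so $\scrH_0$ ceases to be a facet. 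Here I would instead build a fundamental-set facet $\scrH_T$ from a bipartition class $T$ of such a massless (hence bipartite) component, whose supporting functional still vanishes on $\mathbf{b}$ and on the generators internal to $C_u$ while remaining strictly positive on the spokes into $C_u$; the extra check is that any generators this facet \emph{retains} outside $C_u$ never meet $u$, so that the $C_u$-parity count is unaffected. The main obstacle is proving that such a separating facet always exists under the hypothesis, i.e. translating the graph-theoretic separation of $u$ from $v$ in $H_{[0,d]\setminus 0}$ into the existence of a regular vertex or fundamental set in the Ohsugi--Hibi list of Theorem~\ref{thm: facets edge polytope}. As an anchoring consistency check, I note that since connectivity only coarsens in the supergraph, the hypothesis forces the massive subgraph $\widetilde{H}_{[0,j]}$ to satisfy the criterion of Theorem~\ref{massivetheorem}, so $\widetilde{H}_{[0,j]}$ is itself Cohen-Macaulay.
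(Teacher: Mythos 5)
Your core strategy coincides with the paper's: reduce via Proposition~\ref{prop: (ii) always satisfied} and Corollary~\ref{cor: ilp not CM} to exhibiting, for each hole $\mathbf{e}_u+\mathbf{e}_v$, a single facet whose integer linear program is infeasible, and obtain infeasibility of the $x_0=0$ program by the connectivity/parity argument of Theorem~\ref{massivetheorem} (the $x_0$-row forces all spoke coefficients to vanish, after which every remaining column has even coordinate sum over the component $C_u$ while the right-hand side sums to $1$ there). The paper's proof is exactly this and nothing more: it asserts that $(A|-F_0)\boldsymbol{x}=\mathbf{b}$ with $F_0=A\cap\scrH_0$ is infeasible ``by the same argument as presented in the proof of Theorem~\ref{massivetheorem}'' and concludes, together with the closing observation (which you also make) that a non-Cohen-Macaulay massive subgraph would force the hole into a single component of the larger reduced graph.

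The place where you go beyond the paper is also where your proposal stops being a proof. You correctly note that Corollary~\ref{cor: ilp not CM} quantifies only over \emph{facets}, and that by Theorem~\ref{thm: facets edge polytope} the hyperplane $\scrH_0$ is facet-defining only when the vertex $0$ is regular, i.e.\ when every component of $H_{[0,d]\setminus 0}$ contains an odd cycle. This can fail under the hypotheses of the theorem: for a triangle with $m_3=0$ and both legs adjacent to edge $3$ on shell, the vertex $3$ is isolated in $H_{[0,3]\setminus 0}$, so $0$ is not regular, $\scrH_0$ is not a facet, and one must instead use the fundamental-set facet $\scrH_{\{3\}}\colon x_0=x_3$ (whose program is indeed infeasible, but by a slightly different bookkeeping since $-F_{\{3\}}$ contains the spoke $\mathbf{e}_0+\mathbf{e}_3$). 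Your proposed repair --- a fundamental-set facet $\scrH_T$ built from a bipartition class of a massless bipartite component --- is the right idea, but you explicitly leave open both the existence of such a separating facet in general and the check that its retained generators do not spoil the $C_u$-parity count; that existence statement is the actual content needed, so as written the argument is incomplete at exactly the step you flag. For fairness: the paper's own proof does not address this point either --- it runs the argument on $F_0$ without verifying that $\scrH_0$ is a facet --- so the gap you identified is genuine and is shared by, rather than resolved by, the published argument.
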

\begin{proof}
    If the massless vertices are not part of a subgraph that connects the hole(s) in the reduced subgraph $H_{[0,d] \setminus 0}$ there is still no path that will solve the linear program $(A|-F_0)\boldsymbol{x}=\mathbf{b}$, where $F_0=A\cap\scrH_0$, by the same argument as presented in the proof of Theorem \ref{massivetheorem}. Thus, Cohen-Macaulayness is preserved by adding massless vertices that do not connect the hole(s) of a fully massive Cohen-Macaulay kinematic graph. If the massive subgraph is not Cohen-Macaulay, then we know by Theorem \ref{massivetheorem} that the hole(s) lie in the same connected component in $\widetilde{H}_{[0,j] \setminus 0}$, from which follows that they also lie in the same connected component of the bigger graph $H_{[0,d] \setminus 0}$.
\end{proof}
\begin{remark}\label{converseneednotbetrue}
    Note that the converse need not be true, as a non-Cohen-Macaulay kinematic graph that contains a fully massive Cohen-Macaulay subgraph and one extra massless vertex (that is not the zero vertex), can be non-Cohen-Macaulay for two reasons: either the linear program $(A|-F_0)\boldsymbol{x}=\mathbf{b}$ is now feasible (in which case the massless vertex connects the hole and we now have a full equivalence for the above theorem), or the linear program $(A|-F_T)\boldsymbol{x}=\mathbf{b}$ with $T$ being the added massless vertex, can now be feasible (or both). Since the latter case can occur independently of the first one, non-Cohen-Macaulay graphs whose massless vertex does not connect the hole(s) exist. See Figure \ref{fig: counterex} for an example of such a kinematic graph.
\end{remark}
\begin{center}
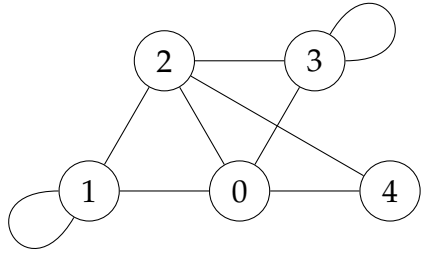
\begin{figure}[t]
        \begin{tikzpicture}
    \node[circle, draw, minimum size=0.8cm] (0) at (0,0) {0};
    \node[circle, draw, minimum size=0.8cm] (3) at ({2*cos(-300)}, {2*sin(-300)}) {3};
    \node[circle, draw, minimum size=0.8cm] (1) at ({2*cos(-180)}, {2*sin(-180)}) {1};
    \node[circle, draw, minimum size=0.8cm] (2) at ({2*cos(-240)}, {2*sin(-240)}) {2};
    \node[circle, draw, minimum size=0.8cm] (4) at ({2*cos(-360)}, {2*sin(-360)}) {4};
 
    \draw (0) -- (1);
    \draw (0) -- (2);
    \draw (0) -- (3);
    \draw (0) -- (4);

    \draw (2) -- (3);
    \draw (1) -- (2);
    \draw (4) -- (2);

    \draw (1) to[out=240, in=180, looseness=8] (1);
    \draw (3) to[out=0, in=60, looseness=8] (3);

\end{tikzpicture}
\caption{This example demonstrates the case in which the criterium of Theorem \ref{generaltheorem} does not provide an equivalence, which proves the statement of Remark \ref{converseneednotbetrue}. Note that in this kinematic graph there is a massive subgraph containing the vertex set $\{0,1,3\}$, and two additional massless vertices $2,4$. The vertex $2$ connects the two components $\{1\}$ and $\{3\}$ in $H_{[0,4]\setminus 0}$, but since the edges $1-2$, $2-3$,$1-0$ and $0-3$ are not in the facet $\mathscr{H}_{T=\{4\}}$, the corresponding linear program is infeasible and the graph is therefore Cohen-Macaulay.}\label{fig: counterex}
\end{figure}
\end{center}
\begin{corollary}\label{Corrollary on 3-pt-integrals}
    A kinematic graph $H_{[0,d]}$ with $d\leq 3$ is Cohen-Macaulay if and only if the holes of its fully massive subgraph $\widetilde{H}_{[0,j]}$ lie in different connected components of $H_{[0,d] \setminus 0}$.
\end{corollary}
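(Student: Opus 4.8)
The plan is to read the equivalence as a two-sided statement whose ``if'' direction is exactly Theorem \ref{generaltheorem}, and to establish the ``only if'' direction by contraposition, exploiting that for $d\le 3$ there are too few vertices for the extra obstructions of Remark \ref{converseneednotbetrue} to appear. So suppose some hole of the fully massive subgraph lies in a single connected component of $H_{[0,d]\setminus 0}$; I want to conclude that $k[\NN B]$ is not Cohen-Macaulay. By Proposition \ref{prop: (ii) always satisfied} only condition (i) of Theorem \ref{thm: Cohen-Macaulay Matusevich Yu} is in question, so by Corollary \ref{cor: ilp not CM} it suffices to fix the hole $\mathbf{b}=\mathbf{e}_u+\mathbf{e}_v$ arising from a missing edge between two massive vertices $u,v$ and to exhibit, at \emph{every} facet, a feasible solution of the integer linear program \eqref{eq: ILP}. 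By Lemma \ref{equivalence} and the remark following it, a solution at a facet $F$ is the same datum as a connecting path from $u$ to $v$ whose every second edge (counting from the second) is a column of $F$.

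First I would enumerate the facets via Theorem \ref{thm: facets edge polytope}: the coordinate facets $\scrH_i$ for regular $i$, together with the facets $\scrH_T$ for fundamental $T$. Since the hole is connected in $H_{[0,d]\setminus 0}$ while $\{u,v\}$ is absent, for $d\le 3$ every path from $u$ to $v$ avoiding the zero vertex must route through the unique remaining vertex $w$, so both $\{u,w\}$ and $\{w,v\}$ are present; for $d\le 2$ no such $w$ exists, the hole is automatically in distinct components, and Theorem \ref{generaltheorem} already settles the case, so the real content is $d=3$. Using the loops at the massive endpoints $u,v$ to fix parity, I would write down three odd-length connecting paths, namely $u\to u\to w\to v$ (loop at $u$; even edge $\{u,w\}$), $u\to w\to v\to v$ (loop at $v$; even edge $\{w,v\}$), and $u\to u\to 0\to v$ (loop at $u$; even edge $\{u,0\}$). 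Their three even edges are $\{u,w\}$, $\{w,v\}$, $\{u,0\}$, and since a coordinate facet $\scrH_i$ forbids exactly the single vertex $i$, at least one of these edges avoids $i$; hence one of the three paths is feasible at each $\scrH_i$. The same path $u\to u\to 0\to v$ handles $\scrH_{\{0\}}$, its even edge being a $\mathcal{U}$-edge and therefore a column of $F_{\{0\}}$.

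It remains to treat the facets $\scrH_T$ coming from a nontrivial fundamental set $T\neq\{0\}$, and this is where the hypothesis $d\le 3$ does its work. Independence of $T$ rules out loops at vertices of $T$ and edges inside $T$, so every element of $T$ is a massless vertex; and because the zero vertex is adjacent to all others, $T$ cannot contain $0$ alongside anything, nor can it contain two of $\{1,2,3\}$ once $u,v$ are massive. Thus $T=\{w\}$ is forced to be the connecting vertex itself. Then $F_{\{w\}}$ consists precisely of the edges incident to $w$, and the path $u\to u\to w\to v$, whose even edge $\{u,w\}$ is incident to $w$, solves the corresponding program. Having produced a feasible solution at every facet for this single hole, Corollary \ref{cor: ilp not CM} yields that $k[\NN B]$ is not Cohen-Macaulay, which is the contrapositive we wanted.

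The main obstacle is exactly the control of the fundamental-set facets $\scrH_T$: these are what break the converse in general, as Remark \ref{converseneednotbetrue} and Figure \ref{fig: counterex} show, where a fourth massless vertex forms a fundamental set \emph{disjoint} from the connecting path and renders $\scrH_T$ infeasible. The essential point to make rigorous is therefore that for $d\le 3$ any nontrivial fundamental set must coincide with the connecting vertex $w$, so that the even edge of the path through $w$ automatically lands in $F_{\{w\}}$ and no such obstruction can arise. Combining this contrapositive with the ``if'' direction of Theorem \ref{generaltheorem} — which applies verbatim when the hole is instead disconnected in $H_{[0,d]\setminus 0}$ — completes the claimed equivalence.
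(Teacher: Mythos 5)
Your proof is correct and follows essentially the same route as the paper's: reduce by a case distinction to the configuration of two massive vertices joined through one massless vertex, then verify feasibility of every facet's integer linear program via odd connecting paths (Lemma \ref{equivalence} and Corollary \ref{cor: ilp not CM}), with the facet of the massless vertex being the decisive one. You are in fact more thorough than the paper, whose proof only explicitly checks $\mathscr{H}_T$; your argument that for $d\le 3$ the only nontrivial fundamental set is $T=\{w\}$ pins down exactly why the obstruction of Remark \ref{converseneednotbetrue} cannot arise here.
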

\begin{center}
\begin{figure}[t]
        \begin{tikzpicture}
    \node[circle, draw, minimum size=0.8cm] (0) at (0,0) {0};
    \node[circle, draw, minimum size=0.8cm] (3) at ({2*cos(-300)}, {2*sin(-300)}) {3};
    \node[circle, draw, minimum size=0.8cm] (1) at ({2*cos(-180)}, {2*sin(-180)}) {1};
    \node[circle, draw, minimum size=0.8cm] (2) at ({2*cos(-240)}, {2*sin(-240)}) {2};

    \draw (0) -- (1);
    \draw (0) -- (2);
    \draw (0) -- (3);

    \draw (2) -- (3);
    \draw (1) -- (2);

    \draw (1) to[out=240, in=180, looseness=8] (1);
    \draw (3) to[out=0, in=60, looseness=8] (3);

\end{tikzpicture}
\caption{This graph displays the unique kinematic graph of a non-normal three point integral, that does not satisfy the criterium of Theorem \ref{generaltheorem}. As discussed in the proof of Corollary \ref{Corrollary on 3-pt-integrals}, this graphs' semigroup is not Cohen-Macaulay, as all its linear programs are feasible.}\label{fig: prooffig}
\end{figure}
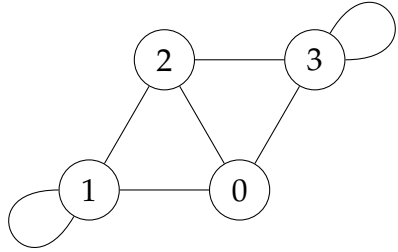
\end{center}
\begin{proof}
    This can be shown by a simple case distinction. All Feynman integrals with $0$ or $1$ massive vertices are normal and therefore Cohen-Macaulay. For no massless vertices this has already been proven in Theorem \ref{massivetheorem}. Thus the only remaining case to consider is $2$ massive vertices and $1$ massless vertex, see Figure \ref{fig: prooffig}. This case is straightforward, as the massless vertex is directly adjacent to the two massive vertices. Therefore, all edges in the connecting path through the massless vertex are in the facet $\mathscr{H}_T$ ($T$ being the massless vertex) making the linear program feasible and the whole graph no longer Cohen-Macaulay. Together with Theorem \ref{generaltheorem}, this yields the full equivalence.
\end{proof}
This theorem provides a tool to detect Cohen-Macaulayness for the most general one-loop integrals; it would of course be desirable to find a full classification.
\section*{Acknowledgments}
FT would like to thank Martin Helmer for many helpful discussions at the early stages of this work. FT was funded by Royal Society grant number URF\textbackslash R1\textbackslash 201473 for the majority of this work and gratefully acknowledges this support. The research of FT is also funded by the European Union ERC Synergy Grant MaScAmp 101167287. Views and opinions expressed are however those of the author(s) only and do not necessarily reflect those of the European Union or the European Research Council. Neither
the European Union nor the granting authority can be held responsible for them. 

KM is grateful for the support of the Studienstiftung des deutschen Volkes (German Academic Scholarship Foundation) and gratefully acknowledges the support of the MLP-scholarship.
\bibliographystyle{JHEP}
\bibliography{refs}

\end{document}